\newtheorem{thm}{Theorem}[section]
\newtheorem{cor}[thm]{Corollary}
\newtheorem{prop}[thm]{Proposition} 
\newtheorem{Lemma}[thm]{Lemma}
\newtheorem{remark}[thm]{Remark}
\newcommand{\R}{\mathbb{R}}
\begin{document}

\author{Aleksander \'{C}wiszewski}
\address{Faculty of Mathematics and Computer Science \newline Nicolaus Copernicus University \newline 87-100 Toru\'n, Poland}
\email{aleks@mat.umk.pl}

\author{S{\l}awomir Plaskacz}
\address{Faculty of Mathematics and Computer Science \newline Nicolaus Copernicus University \newline 87-100 Toru\'n, Poland}
\email{Slawomir.Plaskacz@umk.pl}

\title[Effects of Marine Reserve Creation]{Effects of  Marine Reserve Creation in Single Species and Prey-Predator Models}



\maketitle

\begin{abstract}
Single species fisheries and prey-predator models with marine protected areas (MPA's) are studied. The single species case is considered when the fishing effort is around the species extinction threshold and the influence of implementing MPA on catch quantity are studied. In the prey-predator fishery model, the situation with the fishing effort close to the predator extinction value is considered and the effects of implementing MPA are discussed with the focus on MPA sizes assuring an acceptable catch level (i.e. food security) and the sustainability of both the predator and prey populations.    
\end{abstract}
{\bf Keywords:} prey-predator model, single species model, overexploited fishery, Marine Protected Areas, conservation effects, yield effects, food security.

\section{Introduction}

In many part of the world's oceans it is not possible to control the fishing effort. Many fish species have been over-exploited. The alternative way to protect marine stocks relies on establishing an area where fishing is prohibited, i.e. the so-called Marine Protected Area (MPA). The effects of establishing an MPA has been considered by many authors mainly in the framework of one species (or metapopulation) models. In single-cohort  models the effect of reserves on the fish population has been found positive that is called a conservation benefit (comp. \cite{Matsuda}, \cite{Sanchirico}, \cite{Gerber}). There are some uncertainties concerning the yield improvements in the effect of establishing an MPA. 
It is generally regarded that if the fishery is initially over-exploited then yield increases after establishing an MPA (comp. \cite{Gerber}). A bioeconomic model considered in \cite{Costello SR} gives quantitative guidance of the optimal MPA size that maximizes catch. In \cite{Costello SR}, \cite{Gerber} and many other papers it was postulated to evaluate MPA effects in a multispecies model. In \cite{Takashina}, \cite{Misra}, \cite{Lv}  MPA effects were considered for prey-predator models. In \cite{Misra}, \cite{Lv} it was assumed that the predator does not occur in a reserve area. The full model with both species in the reserve area and in the non reserve area is considered in \cite{Takashina}. Takashina-Mougi-Iwasa show in \cite{Takashina} that establishing MPA can cause a reduction of prey population and even extinction of the prey. That conclusion is totally opposite to conservation benefits observed in single-species models. In the model considered in \cite{Takashina}, the predator is assumed to be a generalist that has an alternative food source other then the prey species in the model. The  population of alternative food source is not counted in the model as a quantity dependent to the predator environment volume.  

In the paper we describe  effects of establishing an MPA for a single species model and for a prey-predator model and we make a comparison of obtained results in both models. In the prey-predator model we assume that the predator is an individualist, i.e. it has no alternative food source. The other possible interpretation is that the prey population is a multispecies.  We shall study both conservation and yield effects. We shall study the models for various value of the parameter called {\em fishing effort} and describe its and other model parameters' role in the dynamics of these systems.\\
\indent The yield effect of establishing the MPA is different for over-exploited and under-exploited fishery. In models, the over-exploitation occurs when 
the fishing effort $E$ oversteps a certain threshold value. As the threshold value we take the value of the fishing effort that threatens the extinction of the fish species, which is red. In single species models only one extinction threshold value $E_{ext}$ occurs, i.e. such that
if the fishing effort $E$ is greater or equal to $E_{ext}$, then the fish population extincts. In the prey-predator model we have two threshold values for fishing effort. The predator extinct when the fishing effort $E$ exceeds 
the value $E_{y-ext}$ and the prey starts to disappear when $E$ oversteps the second threshold $E_{x-ext}$, which is more than $E_{y-ext}$ in the models we consider here. If the predator population is null in the prey-predator model then the model simply reduces to a single species (prey only) one. Therefore, the fish extinction threshold $E_{ext}$ in the single species model coincides with $E_{x-ext}$ of the prey extinction in the prey-predator model. It seems to be reasonable to assume that we have an over-exploited fishery in a prey-predator model if any species extincts. In our model it means that the fishing effort $E$ exceeds $E_{y-ext}$. Hence, the level of fishing effort corresponding to over-exploitation in the prey-predator model is smaller that the fishing effort corresponding to over-exploitation in the corresponding single species model. The main goal of the paper is to compare the yield and conservation  effects of establishing MPA in the prey-predator model and in the single species model for initially over-exploited fisheries. We shall address the problem arising when one wants to sustain the ecosystem. Namely we ask whether we can prevent the extinction of predator when $E$ exceeds $E_{y-ext}$ by imposing an MPA and what the price in terms of total catch, i.e. food security, would be.\\
\indent In the case of single species fisheries, it is natural to consider a logistic model of the form
\begin{equation}\label{log1}
\dot x=a x\left(1- x/K \right)- E q x,
\end{equation}
where $x(t)$ is the fish population at time $t$, $a>0$ is the population growth rate, $K>0$ is the environmental capacity for the fish, $E\geq 0$ is the fishing effort and $q>0$ is the catchability coefficient of the given fish species. The population dynamics in the presence of an MPA  is given by the following system
\begin{equation}\label{r1}
\left\{\begin{array}{l}
\displaystyle{\dot x_1= a x_1\left(1- x_1/(1-R)K\right) - E(R) q  x_1 + m \left((1-R)x_2-R x_1 \right)},\\
\dot x_2= a x_2 \left(1-  x_2/RK \right) - m \left((1-R)x_2-R x_1 \right),
\end{array}\right.
\end{equation}
where $x_i(t)$ are the fish population at time $t$ in area $i$ ($i=1,\,2$). Area $1$ and $2$ represent the fishing ground and MPA, respectively. We represent the fraction of of these areas as $1-R$ and $R$, respectively, where $R\in (0,1)$. We assume that fish migrate between the two areas at migration rate $m$. A similar model of the migration has been considered in \cite{Costello SR}, \cite{Takashina}. Establishing a marine reserve may cause the redistribution of fishing effort into the nonreserve area. In the case of non-redistribution, the fishing effort does not change in relation to the fishery size, i.e. $E(R)=E$. In the case of full redistribution we assume that the fishing effort is fully relocated from the MPA to the accessible fishery, i.e. $E(R)=E/(1-R)$, where $E$ is a fishing effort before establishing the MPA (comp.\cite{Apostolaki}). In the paper we study the case where the fishing effort is not redistributed.\\
\indent  If two species are present in a fishery, we shall study the following prey-predator fishery model 
\begin{equation}\label{pp1K}
\left\{\begin{array}{l}
\dot x = a x\left(1-x/K\right) - bxy/K - E q_x x,\\
\dot y = c xy/K - dy-Eq_y y,
\end{array}\right.
\end{equation}
where $x(t)$ is the prey population and $y(t)$ is the predator population at time $t$, $a$ and $K$ are as in \eqref{log1}, $b>0$ is the mortality death rate of prey caused by predator, $c>0$ is the coefficient describing the dependence of the predator growth rate on the density of prey and $d>0$ is the predator natural death rate. The catchability coefficients $q_x, q_y>0$ for both the species are interpreted as in the one species models. Here we assume additionally that $a>b>c>d$. Using the same modelling approach as in the single species model, after implementing an MPA of size $R\in (0,1)$ we get the following system
\begin{equation}\label{m1}
\left\{
\begin{array}{l}
\! \dot x_1 = a  x_1(1\!-\! x_1/(1 \!-\! R)K)- b x_1 y_1/(1\!-\!R)K-E q _x  x_1+ m_x ((1\!-\!R)x_2\!-\!R x_1),  \\
\! \dot x_2= a  x_2(1\!- \!x_2/RK)- b x_2 y_2/ R-m_x ((1\!-\!R)x_2\!-\! Rx_1),  \\
\! \dot y_1= c x_1 y_1/ (1\!-\!R)K- d  y_1-E(R)q_y y_1+m_y  ((1\!-\!R)y_2\!-\!Ry_1),\\
\! \dot y_2= cx_2 y_2/RK - d y_2 - m_y  ((1\!-\!R)y_2\!-\!Ry_1),
\end{array} \right.
\end{equation}
where $x_1(t), x_2(t)$ are the prey population (biomass) outside and inside the MPA, respectively, and analogously $y_1(t)$ and $y_2(t)$ stand for the predator population outside and inside MPA, respectively. Here $m_x, m_y$ are the mobility coefficients.\\
\indent Both in the single species and prey-predator models we ask if one can prevent the extinction of any species by implementing an MPA and how that MPA will influence the total catch. In the prey-predator model we will also study the total catch coming from the predator as well as how the predator extinction threshold $E_{y-ext}(R)$, understood as a minimal fishing effort that generate predators extinction, is raised by creating an MPA of size $R$ (securing against the predator's extinction).\\
\indent The prey-predator model with MPA \eqref{m1} is described by a system of four nonlinear ODE's, which makes studying its equilibrium points by use of explicit formulas impossible. This causes the main mathematical difficulty.
Nevertheless, we prove the uniqueness of positive equilibria (with all the four coordinates positive) for a quite general class of models -- see Theorem \ref{necessary-condition}. We also indicate a general algorithm leading to these equilibria -- see Remark \ref{exitstence-rem}. Moreover, in Theorem \ref{EQ1.1} we show the existence of the fishing effort threshold $E_{y-ext}(R)$ (depending on the MPA size $R$) that is bigger than $E_{y-ext}$ and below which MPA assures the existence of positive steady-states and above which there are no positive steady-states.
We will also see that after the implementation of MPA, it is again the predator that extincts as the first one as $E$ grows -- see Remark \ref{extinction-order}. 
Our results and simulations show that implementing an MPA, even when $E$ is close to $E_{y-ext}$, gives a significant 'safety margin' that secures the predator from extinction -- see Subsection 5.3.

The papers is organized as follows. In Section 2 we describe the single species fishery model and study the effects of implementation of MPA. The prey-predator fishery model without MPA is examined in Section 3. In Section 4 we deal with the prey-predator model with MPA. We show the uniqueness of positive steady-states and its existence in dependence on fishing effort as well as discus its stability. Section 5 is devoted to the interpretation and discussion of numerical simulations. In Section 6 we end the paper with concluding remarks referring to the initially stated problems.


\section{Single species model analysis}


In the section we consider the single species fishery model given by \eqref{log1}. We assume that the capacity is normalized, $K=1$, i.e.  the fish population  $x(t)$ is assumed to satisfy $0\leq x(t)\leq 1$ (one may think of $x(t)$ as the density of the species relative to the environment capacity). Clearly, the equilibrium point of (\ref{log1}) is given by $\bar{x}=\frac{a-Eq}{a}$. This equilibrium is positive if and only if
\begin{equation}\label{Eext}
E<E_{ext}:=\frac{a}{q}.
\end{equation}
It is clear that the positive equilibrium point $\bar{x}$ is globally asymptotically stable in the interval $(0,1)$. 
If the fishing effort $E$  exceeds the threshold fishing effort $E_{ext}$, then it results in the extinction of the whole species, i.e. $\lim_{t\to\infty} x(t)=0$ for any positive initial value. The mentioned asymptotic stability of the positive equilibrium, allows us to assume that, for a fixed fishing effort $E\in (0, E_{ext})$, the catch $C(E)$ is given by
$$ 
C(E)= E q \bar{x} = \frac{q^2}{a}E \left(\frac{a}{q}-E \right).
$$
In the paper we consistently use this simplification following many other authors.\\ 
\indent It is clear that maximal catch is obtained for the effort $E_{opt}$ given by
$$
E_{opt}=\mbox{argmax}\; C(E)=\frac{a}{2q}=\frac{E_{ext}}{2}.
$$
The equilibrium $\bar{x}_{opt}$ corresponding to the optimal effort $E_{opt}$ equals to $1/2$ (half of the capacity), therefore the maximal catch is $C(E_{opt})= E_{opt} q \bar{x}_{opt}=\frac{a}{4}$.\\

\subsection{MPA for single species }
Now let us pass to the MPA model given by \eqref{r1} that is related to \eqref{log1}. In order to find a steady-state of \eqref{r1}, we solve the following system
\begin{equation}\label{r2}
\left\{\begin{array}{l}
x_1\left(a(1-x_1/(1-R))-q E - mR\right) +m(1-R)x_2=0\\
x_2\left(a(1-x_2/R)-m(1-R)\right)+mRx_1=0,
\end{array}\right.
\end{equation}
We say that $(x_1,x_2)$ is a {\em positive solution} if $x_1>0$ and $x_2>0$. Observe also that if $(x_1, x_2)$ is a solution of \eqref{r2}, then $x_1=0$ if and only if $x_2=0$, there are no solutions with only one positive coordinate.\\
\indent Let us see how the existence of positive equilibria depends on $E$ and $R$. To this end in (\ref{r2}) set 
\begin{eqnarray}\label{A1}
& A=\displaystyle{\frac{1}{a}}(1-R)(a-qE-mR),\ \ B=\displaystyle{\frac{m}{a}}(1-R)^2,\\
& C =\displaystyle{\frac{R}{a}}(a-m(1-R)),\ \ D=\displaystyle{\frac{m}{a}R^2}
\end{eqnarray}
to obtain an equivalent form
\begin{equation}\label{r3}
\left\{\begin{array}{l}
\displaystyle{ x_2= \frac{1}{B}x_1(x_1-A)}\\
\ \ \\
\displaystyle{x_1=\frac{1}{D}x_2(x_2-C)}.
\end{array}\right.
\end{equation}
The existence of positive solution for \eqref{r3} is characterized in the following result.
\begin{prop}\label{alg-existence-prop}
Suppose that $B,\,D>0$. The system of equations \eqref{r3} has a unique positive solution if and only if
$A\geq 0$ or $C\geq 0$ or
\begin{equation}\label{A0A}
A<0 \mbox{ and } C<0 \mbox{ and } DB>AC.
\end{equation}
\end{prop}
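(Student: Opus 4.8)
The plan is to collapse the two equations of \eqref{r3} into one cubic equation in $x_1$ and then argue by monotonicity of its derivative. Using the first equation to eliminate $x_2$, a pair $(x_1,x_2)$ with $x_1>0$ is a positive solution of \eqref{r3} if and only if $x_2=\tfrac1B x_1(x_1-A)$; since $x_2>0$ forces $x_1-A>0$, this confines $x_1$ to $I:=(\max\{0,A\},\infty)$, and substituting into the second equation and cancelling the factor $x_1>0$ turns that equation into
\[
P(x_1):=x_1(x_1-A)^2-BC\,(x_1-A)-DB^2=0 .
\]
Conversely, every root $x_1\in I$ of $P$ gives, via $x_2=\tfrac1B x_1(x_1-A)>0$, a positive solution of \eqref{r3}, and distinct roots give distinct solutions; the excluded value $x_1=\max\{0,A\}$ always produces $x_2=0$. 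Hence it is enough to prove that $P$ has exactly one root in $I$ when $A\ge0$, or $C\ge0$, or \eqref{A0A} holds, and no root in $I$ otherwise.

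Next I would record the shape of $P$ on $I$. Here $P'(x)=(x-A)(3x-A)-BC$ and $P''(x)=6x-4A$, and for $x>\max\{0,A\}$ one has $6x-4A>0$; thus $P'$ is strictly increasing on $I$, so $P$ is either strictly increasing on $I$ or strictly decreasing then strictly increasing there. Also $P(x)\to+\infty$ as $x\to+\infty$. At the left endpoint $\ell:=\max\{0,A\}$ one computes $P(\ell)=-DB^2<0$ when $A\ge0$ (so $\ell=A$), and $P(\ell)=P(0)=ABC-DB^2=B(AC-DB)$ when $A<0$ (so $\ell=0$). The elementary consequence I will use is: a function that, on $(\ell,\infty)$, is either increasing or decreasing-then-increasing, tends to $+\infty$, and has $P(\ell)<0$, possesses exactly one zero in $(\ell,\infty)$ — on any initial decreasing stretch it stays below $P(\ell)<0$, and on the final increasing stretch it rises from a negative value to $+\infty$, crossing zero once.

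The conclusion then follows by cases. If $A\ge0$, then $P(\ell)=-DB^2<0$, and the fact above gives a unique root in $I$, hence a unique positive solution. If $A<0$ and $C\ge0$, then $AC\le0<DB$, so $P(0)=B(AC-DB)<0$, and again there is a unique root. If $A<0$ and $C<0$, then $-BC>0$ and $(x-A)(3x-A)>0$ for every $x>0$, so in fact $P'>0$ on $I=(0,\infty)$ and $P$ is strictly increasing there; therefore $P$ has a (necessarily unique) root in $I$ exactly when $P(0)=B(AC-DB)<0$, i.e.\ exactly when $DB>AC$, while $DB\le AC$ makes $P(0)\ge0$, hence $P>0$ on $I$ and there is no root. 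These four cases together are precisely the assertion.

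The one genuinely nontrivial point is \emph{uniqueness} — excluding a second root — and this is exactly what the monotonicity of $P'$ (i.e.\ the sign of $P''$ on $I$) provides; the rest is bookkeeping, the subtlest item being the borderline $DB=AC$ in \eqref{A0A}, which falls on the ``no positive solution'' side and is captured correctly by keeping the interval $(\max\{0,A\},\infty)$ open at its left endpoint.
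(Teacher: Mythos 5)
Your proof is correct. The reduction is sound: eliminating $x_2$ via the first parabola confines $x_1$ to the open interval $(\max\{0,A\},\infty)$, and after cancelling the factor $x_1>0$ the second equation becomes the cubic $P(x_1)=x_1(x_1-A)^2-BC(x_1-A)-DB^2=0$; the computations $P'(x)=(x-A)(3x-A)-BC$, $P''(x)=6x-4A>0$ on the interval, and the endpoint values $P(A)=-DB^2$ (for $A\ge0$) and $P(0)=B(AC-DB)$ (for $A<0$) all check out, and the ``increasing or decreasing-then-increasing with negative left limit and $+\infty$ right limit'' argument correctly yields exactly one root in the admissible range, while the borderline $DB=AC$ correctly lands on the ``no positive solution'' side.

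Your route is, however, genuinely different from the paper's. The paper offers no written proof of this proposition at all: it remarks only that the two equations of \eqref{r3} describe a vertical-axis and a horizontal-axis parabola and refers the reader to geometric pictures in the cited reference, where existence and uniqueness of the intersection point in the open first quadrant are read off from the relative position of the parabolas' branches (the sign conditions on $A$, $C$ and the inequality $DB>AC$ govern where each parabola crosses the axes). Your algebraic reduction to a single cubic, with uniqueness enforced by convexity of $P$ on the relevant interval, is self-contained and rigorous where the paper is only suggestive, and it additionally produces an explicit scalar equation for the equilibrium that could be solved numerically. What the geometric viewpoint buys instead is the intuition that the paper reuses later: the proof of Theorem \ref{EQ1.1} for the four-dimensional prey--predator system is built on exactly the same parabola-intersection picture, so the reference to \cite{Matsuda} is also a setup for that argument. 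Either approach is acceptable here; yours fills a gap the paper leaves open.
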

\noindent Geometrical ideas of its proof can be find in \cite{Matsuda} (see  Figures 1-5).
\begin{cor}\label{C2}
Assume $a,m>0$ and $R\in (0,1)$. If  $m\leq a$ or $E<E_{ext}$ or
\[
a<m \ \mbox{ and } \ E>E_{ext} \ \mbox{ and }\  R> \left(1- \frac{E_{ext}}{E}\right)\left(1-\frac{a}{m}\right),
\]
then the system of equations \eqref{r2} has a unique positive solution $\left(\tilde{x}_1(E,R),\,\tilde{x}_2(E,R)\right)$.
\end{cor}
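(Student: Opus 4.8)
The plan is to read the Corollary off Proposition \ref{alg-existence-prop} applied to the equivalent system \eqref{r3}. Since $m>0$ and $R\in(0,1)$ we have $B=\frac{m}{a}(1-R)^2>0$ and $D=\frac{m}{a}R^2>0$, so the proposition applies verbatim, and the existence together with uniqueness of a positive solution of \eqref{r2} is equivalent to the trichotomy ``$A\geq 0$, or $C\geq 0$, or ($A<0$ and $C<0$ and $DB>AC$)''. Everything then reduces to verifying this trichotomy under each of the three listed sufficient conditions; the unique solution produced this way is what is denoted $\bigl(\tilde x_1(E,R),\tilde x_2(E,R)\bigr)$. The computational core, which I would establish first, is the identity obtained by expanding the products built from \eqref{A1}:
$$\frac{a^{2}}{R(1-R)}\,(DB-AC)=a(m-a)+qE\bigl(a-m(1-R)\bigr)=:\Phi(E,R),$$
so that, the prefactor $R(1-R)/a^{2}$ being positive, $DB>AC$ holds exactly when $\Phi(E,R)>0$. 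I would also record two elementary sign facts: $C<0$ is the same as $m(1-R)>a$, which (because $1-R<1$) forces $m>a$; and $a-qE<0$ forces $a-qE-mR<0$, hence $A<0$.

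With these in hand the three cases are short. If $m\leq a$, then $m(1-R)<m\leq a$, so $a-m(1-R)>0$ and $C>0$, and the trichotomy holds. If $E<E_{ext}$, i.e. $qE<a$, and neither $A\geq 0$ nor $C\geq 0$, then $A<0$ and $C<0$, so $m>a$ and $m(1-R)-a>0$, whence
$$\Phi(E,R)=a(m-a)-qE\bigl(m(1-R)-a\bigr)>a(m-a)-a\bigl(m(1-R)-a\bigr)=amR>0,$$
so $DB>AC$. Finally, if $a<m$, $E>E_{ext}$ (so $qE>a$) and $R>(1-E_{ext}/E)(1-a/m)$, then $A<0$; assuming moreover $C<0$ (otherwise we are done), a one-line rearrangement of $\Phi(E,R)>0$ gives $qEmR>(qE-a)(m-a)$, and dividing by $qEm>0$ this reads $R>\frac{qE-a}{qE}\cdot\frac{m-a}{m}=(1-E_{ext}/E)(1-a/m)$, which is exactly the hypothesis; hence $DB>AC$. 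In every case the trichotomy of Proposition \ref{alg-existence-prop} is satisfied, so \eqref{r3}, and therefore \eqref{r2}, has a unique positive solution.

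I do not anticipate a real obstacle: the argument is entirely a matter of invoking Proposition \ref{alg-existence-prop} and chasing signs. The one place to be careful is the polynomial expansion behind the identity for $DB-AC$ — it is short but easy to slip on — and, relatedly, checking that the threshold $\bigl(1-E_{ext}/E\bigr)\bigl(1-a/m\bigr)$ really is the exact borderline for $\Phi(E,R)>0$ rather than merely a sufficient bound. Once that identity is verified, each of the three hypotheses collapses to a single sign inequality as above.
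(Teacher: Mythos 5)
Your proposal is correct and follows essentially the same route as the paper: reduce to Proposition \ref{alg-existence-prop} via $B,D>0$, dispose of the cases $A\geq 0$ or $C\geq 0$, and then compute $DB-AC$ explicitly to check its sign when $A<0$ and $C<0$; your identity $\frac{a^{2}}{R(1-R)}(DB-AC)=a(m-a)+qE\bigl(a-m(1-R)\bigr)$ is just the expanded form of the paper's factored expression \eqref{A3}, and your three sign-chasing cases match the paper's. No gaps.
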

\noindent \begin{proof}  
Observe that $B, D>0$. If $A\geq 0$ or $C\geq 0$ then, due to Proposition \ref{alg-existence-prop}, \eqref{r3} and consequently (\ref{r2}) has a unique positive solution. So, it is sufficient to consider the case
$$
A<0 \mbox{ and } C<0.
$$
We have, by direct computation,
\begin{eqnarray}\label{A3}
DB-AC & = &  \frac{EqmR(1-R)}{a^2}\left(R - \left(1-\frac{E_{ext}}{E}\right)\cdot\left(1-\frac{a}{m}\right) \right).
\end{eqnarray}
The assumption $C<0$ implies that $m>a$. In the case $E\leq E_{ext}$, it is clear that $DB-AC>0$ for any $R\in (0,1)$. If $E>E_{ext}$, then
$DB-AC>0$ provided $R> (1-E_{ext}/E)(1-a/m)$. \end{proof}
\begin{remark}{\em
When implementing an MPA of size $R$ (usually when the fishing effort $E$ approaches $E_{ext}$), one may ask whether there is a security margin that prevents the extinction of the species even if the increase of $E$ will not be stopped. For a fixed $R\in (0,1)$, let $E_{ext}(R)$ be the infimum of the set of $E>0$ such that the species dies out, i.e. the only equilibrium of \eqref{r2} is the trivial one. Observe that, in view by Corollary \ref{C2}, we have  $E_{ext}(R)=+\infty$, for any $R\in (0,1)$, if only $m\leq a$. Clearly, it can be easily seen from the geometric analysis of \eqref{r3}, that, for large $E$, both $\tilde x_1(E,R)$ and $\tilde x_2(E,R)$ are small (hence in practice $E$ can not be increased arbitrarily). In the case $m>a$, we see that $C<0$ and, by Proposition \ref{alg-existence-prop}, we get the existence of a positive solution for \eqref{r3} if and only if $A\geq 0$ or $A<0$ and 
$DB-AC>0$. If $E > (a-mR)/q = E_{ext}\cdot (1-mR/a)$, then $A<0$ and the existence of positive solutions is then, in view of \eqref{A3}, equivalent to
$$
1 - \frac{R}{1-a/m} < \frac{E_{ext}}{E}.
$$
which yields $E_{ext}(R)=+\infty$ if $R \geq [1-a/m,1)$ and $E_{ext}(R)=E_{ext}/(1 - R/(1-a/m))$ for $R\in (1, 1-a/m)$.
}
\end{remark}

To verify the local asymptotic stability of the steady-state point $(\bar{x}_1,\bar{x}_2)$ of (\ref{r1}) observe that the linearization, i.e. the Jacobian of the vector field defined by the right-hand sides of the system equations, is given by
\[
\left[
\begin{array}{cc}
-\displaystyle{\frac{m(1-R)\bar{x}_2}{\bar{x}_1}}-\displaystyle{\frac{\bar{x}_1a}{1-R}}&m(1-R)\\
mR&-\displaystyle{\frac{mR\bar{x}_1}{\bar{x}_2}}-\displaystyle{\frac{a\bar{x}_2}{R}}
\end{array}
\right].
\]
By a simple calculation we obtain that the trace of the matrix is  negative and its  determinant is positive, which means that the $(\bar{x}_1,\bar{x}_2)$ is locally stable.

To verify the global asymptotic stability of the positive steady state point $(\bar{x}_1,\bar{x}_2)$ in the first quadrant $(0,\infty)\times(0,\infty)$ we construct a Lyapunov functional. First we change coordinates: $\frac{x_1}{1-R}\to x_1$, $\frac{x_2}{R}\to x_2$. In new coordinates system (\ref{r1}) has the following form:
\begin{equation}\label{r1A}
\left\{\begin{array}{l}
\dot x_1 =ax_1(1-x_1)-Eqx_1 +mR(x_2-x_1)\\
\dot x_2 =ax_2(1-x_2)-m(1-R)(x_2-x_1).
\end{array}\right.
\end{equation}
For the equilibrium points we clearly have
$$
a - Eq - mR= a\bar x_1 -mR\bar x_2/\bar x_1 , \ \ \ \ 
a -m(1-R) = a\bar x_2-m(1-R) \bar x_1/\bar x_2 ,
$$
which,  when put into \eqref{r1A},
yields the system (now without $E,q_x, q_y$ but with $\bar x_1, \bar x_2$)
\begin{equation}\label{r1B}  
(\dot x_1, \dot x_2)  = f (x_1, x_2),
\end{equation}
with $f= (f_1, f_2)$ given by
\begin{eqnarray*}
& & f(x_1,x_2) = ax_1(\bar{x}_1-x_1)+\frac{mR}{\bar{x}_1}(\bar{x}_1 x_2-\bar{x}_2x_1),\\
& & f_2 (x_1,x_2)=ax_2(\bar{x}_2-x_2)-\frac{m(1-R)}{\bar{x}_2}(\bar{x}_1x_2-\bar{x}_2x_1).
\end{eqnarray*}
\begin{prop}\label{1S-MPA-stablility}
For any $C_1, C_2>0$ such that $C_1R\bar{x}_2=C_2(1-R)\bar{x}_1$ the 
function $V:(0,+\infty) \times (0,+\infty) \to \R$ given by
$$
V(x_1,x_2)=C_1\varphi(x_1,\bar{x}_1)+C_2\varphi(x_2,\bar{x}_2),
$$ 
\begin{equation}\label{r1C}
\varphi(x,\bar{x})=x-\bar{x}-\bar{x}\ln\left(x/\bar{x}\right),
\end{equation}
is a Lyapunov functional for \eqref{r1B}, i.e. for any $(x_1, x_2) \in 
(0,+\infty)\times (0,+\infty) \setminus \{ (\bar x_1, \bar x_2)\}$,
\begin{eqnarray*}
V(x_1,x_2)>V(\bar{x}_1,\bar{x}_2)=0, \ \ \ \ \ \nabla V(x_1,x_2) \cdot f(x_1,x_2) < - \rho |(x_1, x_2)-(\bar x_1, \bar x_2)|^2 
\end{eqnarray*}
for some fixed $\rho>0$. Hence the equilibrium $(\bar x_1, \bar x_2)$ is asymptotically globally stable.
\end{prop}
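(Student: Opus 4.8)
The plan is to check the three assertions in turn: that $V$ is positive with its unique minimum $0$ at the equilibrium, that $\nabla V\cdot f$ satisfies the stated differential inequality, and then to deduce global asymptotic stability by a standard Lyapunov argument, using that $V$ is proper on the open quadrant. (The constraint $C_1R\bar x_2=C_2(1-R)\bar x_1$ is non-vacuous, being satisfied e.g. by $C_1=(1-R)\bar x_1$, $C_2=R\bar x_2$.)

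First I would record the elementary properties of $x\mapsto\varphi(x,\bar x)$. Since $\partial_x\varphi(x,\bar x)=1-\bar x/x$ vanishes only at $x=\bar x$, is negative on $(0,\bar x)$ and positive on $(\bar x,\infty)$, while $\partial_{xx}\varphi(x,\bar x)=\bar x/x^2>0$, the function $\varphi(\cdot,\bar x)$ is strictly convex on $(0,\infty)$ with unique minimum $0$ at $\bar x$, hence strictly positive away from $\bar x$. This gives $V(x_1,x_2)>V(\bar x_1,\bar x_2)=0$ on $(0,\infty)^2\setminus\{(\bar x_1,\bar x_2)\}$. I would also note here that $\varphi(x,\bar x)\to+\infty$ both as $x\to0^+$ and as $x\to+\infty$, so $V$ is coercive on $(0,\infty)^2$; this is used at the end.

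Next comes the core computation. Since $\partial_{x_i}V(x_1,x_2)=C_i\bigl(1-\bar x_i/x_i\bigr)$,
\[
\nabla V(x_1,x_2)\cdot f(x_1,x_2)=C_1\,\frac{x_1-\bar x_1}{x_1}\,f_1(x_1,x_2)+C_2\,\frac{x_2-\bar x_2}{x_2}\,f_2(x_1,x_2).
\]
Substituting the explicit forms of $f_1,f_2$ and splitting off the logistic parts $C_i(x_i-\bar x_i)\cdot a(\bar x_i-x_i)=-aC_i(x_i-\bar x_i)^2$, it remains to treat the two migration contributions. Setting $u=x_1/\bar x_1$, $v=x_2/\bar x_2$, a short computation rewrites them as $C_1mR\bar x_2\,(u-1)(v-u)/u$ and $-C_2m(1-R)\bar x_1\,(v-1)(v-u)/v$. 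Here the hypothesis $C_1R\bar x_2=C_2(1-R)\bar x_1=:\kappa$ enters decisively: the two terms carry the common factor $\kappa m(v-u)$, and since $(u-1)/u-(v-1)/v=1/v-1/u=(u-v)/(uv)$, their sum equals $-\kappa m(v-u)^2/(uv)\le0$. Hence
\[
\nabla V\cdot f=-aC_1(x_1-\bar x_1)^2-aC_2(x_2-\bar x_2)^2-\kappa m\,\frac{(x_2/\bar x_2-x_1/\bar x_1)^2}{(x_1/\bar x_1)(x_2/\bar x_2)}\le-\rho\,\bigl|(x_1,x_2)-(\bar x_1,\bar x_2)\bigr|^2
\]
with $\rho=a\min\{C_1,C_2\}>0$, which is exactly the asserted inequality and in particular gives $\nabla V\cdot f<0$ off the equilibrium. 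Finally I would assemble the conclusion: the positive quadrant is forward invariant (on $\{x_1=0\}$ one has $\dot x_1=mRx_2\ge0$, on $\{x_2=0\}$ one has $\dot x_2=m(1-R)x_1\ge0$), so every solution with positive data stays in $(0,\infty)^2$; along it $t\mapsto V(x_1(t),x_2(t))$ is nonincreasing and strictly decreasing unless the solution is the equilibrium; since $V$ is coercive, the sublevel set $\{V\le V(x_1(0),x_2(0))\}$ is a compact subset of $(0,\infty)^2$, so the trajectory is bounded and bounded away from the boundary. The standard Lyapunov theorem (equivalently LaSalle's principle, since here $\{\nabla V\cdot f=0\}=\{(\bar x_1,\bar x_2)\}$) then yields $(x_1(t),x_2(t))\to(\bar x_1,\bar x_2)$, which together with the already established local stability proves global asymptotic stability in the first quadrant.

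I expect the only genuine obstacle to be the algebra collapsing the two migration terms into the single nonpositive square $-\kappa m(v-u)^2/(uv)$: it succeeds precisely because of the weighting relation $C_1R\bar x_2=C_2(1-R)\bar x_1$, and keeping the $(u-1)$, $(v-1)$ and $(v-u)$ factors straight is the one place where care is needed. The remaining ingredients — convexity of $\varphi$, forward invariance of the quadrant, and coercivity of $V$ — are routine.
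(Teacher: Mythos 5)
Your proposal is correct and follows essentially the same route as the paper: a direct computation of $\nabla V\cdot f$ in which the weighting condition $C_1R\bar x_2=C_2(1-R)\bar x_1$ collapses the migration terms into a single nonpositive square, yielding the decomposition $-aC_1(x_1-\bar x_1)^2-aC_2(x_2-\bar x_2)^2-\text{(nonnegative square)}$ that the paper records (your version additionally supplies the routine coercivity, invariance and LaSalle details the paper leaves implicit). The only nitpick is that to get the \emph{strict} inequality $\nabla V\cdot f<-\rho|(x_1,x_2)-(\bar x_1,\bar x_2)|^2$ you should take $\rho$ strictly smaller than $a\min\{C_1,C_2\}$, since with $\rho=a\min\{C_1,C_2\}$ equality can occur when $C_1=C_2$ and $x_1\bar x_2=x_2\bar x_1$.
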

\begin{proof} Indeed, by direct calculations we get
$$
\nabla V(x_1, x_2)\cdot f(x_1, x_2) =-C_1a(x_1-\bar{x}_1)^2-C_2a(x_2-\bar{x}_2)^2-\frac{C_1 R m a}{\bar{x_1}}\frac{1}{x_1\,x_2}(x_1\bar{x}_2-x_2\bar{x}_1)^2,
$$
which ends the proof.
\end{proof}

\subsection{Benefits of establishing an MPA in one species model}
The catch volume $C=C(E,R)$ is given by $$
C(E,R)=Eq\bar{x}_1,
$$
where $(\bar x_1, x_2)=(\bar{x}_1 (E,R),\,\bar{x}_2(E,R))$ is a positive solution to (\ref{r2}). If the system (\ref{r2}) have no positive solution then $C=0$.
We skip analytical formulas of the steady-state point $(\bar{x}_1,\bar{x}_2)$. 
Instead, we numerically analyze  several scenarios. By a scenario we understand the choice of the values of parameters in the model. The optimal size of MPA ($R_{opt}$) that maximizes the catch volume depends to the choice of $E$ - the fishing effort and $m$ - the mobility. We assume that the growth rate $a=1$ and the catchability $q=0.7$. We consider scenarios with four values of $E$ and three values of the mobiility $m$.\\
\indent We shall examine the model with the sustainable fishing effort $E_1=E_{opt}$, a moderate case when $E_2=(E_{opt}+E_{ext})/2$ (exceeding the optimal value but us still below the extinction one $E_{ext}$) and, regarding over-exploitation of many stocks, $E_3 = E_{ext}$, $E_4=2\,E_{ext}$. The mobility parameter comes from \cite{Takashina} where the value $m=0.1$ corresponds to a sedentary species, $m=1$ to a slowly migrating one and $m=10$ to a fast migrating. The differences between the values of mobility $10$ and $100$ are insignificant.\\
\indent In the considered example with no MPA, the maximal catch is obtained for $E=E_{opt}\approx 1.4286$ and $C(E_{opt})=0.25$. The maximal catch $C(E_{opt})$ will be our reference value for the catch in the model with MPA, namely we will normalize the catch by dividing it by $C(E_{opt})$. Let us also mention that the extinction fishing effort $E_{ext}\approx 2.8571$.
\begin{figure}[ht]
  \centering
  \includegraphics[keepaspectratio=true,width=16cm]{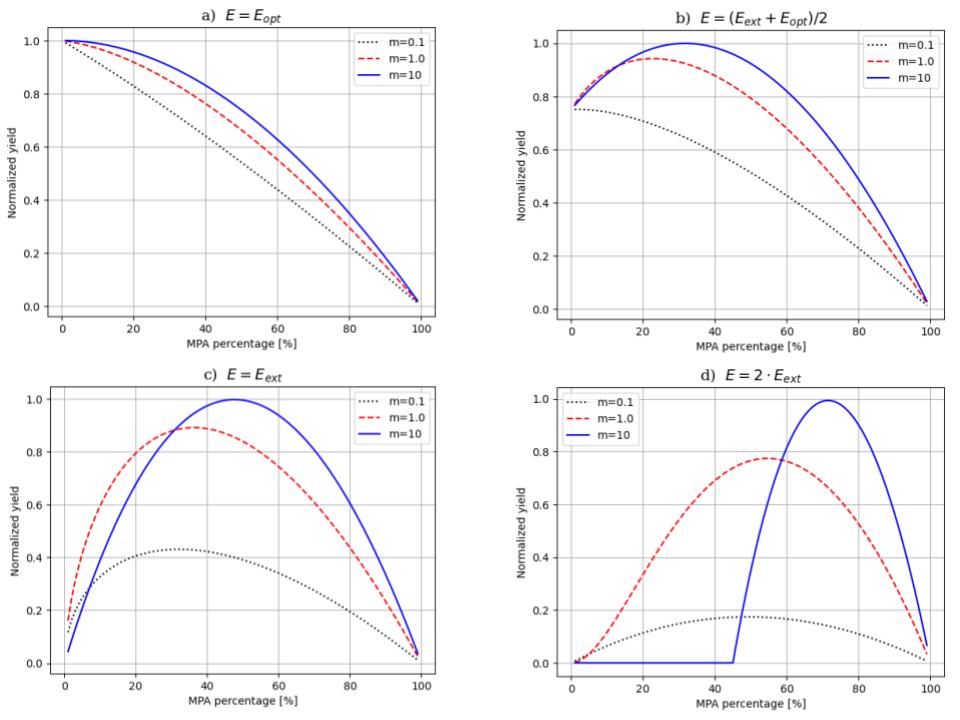}
  \caption{The normalized catch as a function of MPA size $R$ in one species model. The parameter values are $a=1$, $q=0.7$}. \label{Fig1}
\end{figure}
\indent For the sustainable fishing effort $E= E_{opt}$, establishing an MPA will result in a decrease of the catch independently of the mobility level -- see  Figure \ref{Fig1}a. A similar behavior we observe for $E<E_{opt}$. So, for $E<E_{opt}$, one does not expect any MPA benefits of in terms of catch.\\
\indent For a moderate fishing effort $E=(E_{opt}+E_{ext})/2$ and sedentary fish ($m=0.1$), introducing an MPA decreases the catch that is equal to $75\%$ of the optimal $C(E_{opt})$ for $R=0\%$. But for the higher mobilities $m=1$ and $m=10$, implementing an MPA will allow to get closer the optimal catch. For $m=1$ and $R=25\%$ the catch is $94\%$ of the maximal one $C(E_{{opt}})$ and for $m=10$ the optimal MPA size is $R=30\%$, for which the catch is almost $100\%$ of the maximal once -- see Figure \ref{Fig1}b.\\
\indent For the fishing effort $E=E_{ext}$ that normally (that is with no MPA) leads to the extinction of the stock of fish, the optimal MPA size and the catch strongly depend on the mobility -- see Figure \ref{Fig1}c. For sedentary fish ($m=0.1$) the optimal size of the MPA is $R=30\%$ and the corresponding catch amounts to $43\%$ of the maximal one. For slow mobility fish ($m=1$) optimal size of the MPA is $R=35\%$ and the corresponding catch is around $89\%$ of the maximal one. For the fast mobility ($m=10$), the optimal size of the MPA is $R=45\%$ and the catch is almost $100\%$ of the maximal one.\\

\newpage
\indent In the case of very high fishing effort $E=2\, E_{ext}$ and the slow and fast mobilities, the catch is zero if size of the MPA is less than $40\%$ and $50\%$, respectively. This observation is consistent with Corollary \ref{C2}. Nevertheless, for the fast mobility ($m=10$), if the MPA is large ($R=70\%$), then the catch exceeds $97\%$ of the maximal one. For the sedentary and slow mobilities, the optimal size of the MPA is $50\%$ and $55\%$, respectively, but the corresponding catch is below $17\%$ and $77\%$ of the maximal one, respectively.\\
\indent Summing up, in the considered one species model we observe yield benefits of establishing a marine reserve for over-exploited fisheries. The level of the yield benefits is larger for a mobile species. The conclusion is consistent with the other one species model analysis provided in \cite{Costello SR}, \cite{Matsuda}, \cite{Sanchirico}. Moreover, we observed that the total biomass $\bar{x}_1+\bar{x}_2$ at the steady-states is an increasing function of the marine reserve size $R$, also for sustainable fishing efforts. In all cases, implementing an MPA brings conservation benefits.

\section{Prey-predator model without MPA}
In the section we consider the  prey-predator model with fisheries ruled by the system of differential equations (\ref{pp1K}). Assuming that the capacity $K$ of the prey stock is equal to $1$ we obtain
\begin{equation}\label{pp1}
\left\{\begin{array}{l}
\dot x=ax(1-x)-bxy-Eq_x x,\\
\dot y=cxy-dy-Eq_y y,
\end{array}\right.
\end{equation}
where all other  parameters are like in (\ref{pp1K}).
Obviously, the system has up to three equilibrium points (they may possibly coincide)
\[
 (\bar{x},\bar{y}),\; (0,0), \; (\hat{x},0),
 \] 
where
\begin{equation}\label{pp1a}
 \bar{x}=\bar{x}(E)=\frac{d+Eq_y }{c}, \ \ \ \; \bar{y}=\bar{y}(E)=\frac{a(1-\bar{x})-Eq_x }{b},\; \ \ \ \hat{x}=\hat{x}(E)=\frac{a-Eq_x }{a}.
 \end{equation}
We are interested in non-negative equilibrium points that are globally stable in the first quadrant.
\begin{prop}\label{pp2}
Let 
$$
E_{y-ext}=\frac{a(c-d)}{a q_y+cq_x} \ \ \ \mbox{ and } \ \ \ E_{x-ext}=\frac{a}{q_x }.
$$
{\em (i)}  $0<E_{y-ext}<E_{x-ext}$ and
$$
\bar y(E)=\frac{aq_y+cq_x}{bc}(E_{y-ext}-E) \ \ \ \mbox{ and } \ \ \ 
\hat x(E)= 1- \frac{E}{E_{x-ext}}.
$$
{\em (ii)} If $0< E<E_{y-ext}$, then $(\bar x(E), \bar y(E))$ is a unique positive equilibrium of \eqref{pp1}
that is asymptotically stable in the first quadrant, i.e. $(0,+\infty)\times (0,+\infty)$
and $(\hat x(E),0)$ is an equilibrium with $\hat x(E)>0$.\\ 
{\em (iii)} If $E=E_{y-ext}$, then the prey is subject to extinction, i.e. $\bar y(E_{y-ext})=0$, and, moreover $\bar x(E_{y-ext}) = \hat x(E_{y-ext})$, which means that $(\hat x, 0)$ is then the only non-negative equilibrium of \eqref{pp1}. If $E=E_{x-ext}$, then $\hat x (E_{x-ext})=0$ and $(0,0)$ is the only equilibrium of \eqref{pp1}.\\
{\em (iv)} If $E_{y-ext}\leq E< E_{x-ext}$, then $0<\hat x(E)<\bar x(E)$ and $(\hat{x},0)$ is the only non-negative equilibrium point of \eqref{pp1} and it is globally asymptotically stable in the first quadrant.\\
{\em (v)} If $E\geq E_{x-ext}$, then $(0,0)$ is the unique equilibrium of \eqref{pp1} and it is globally asymptotically stable in the first quadrant.
 \end{prop}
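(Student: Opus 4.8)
The plan is to separate the proposition into its algebraic skeleton -- part (i) together with the bookkeeping of which equilibria are non-negative -- and its dynamical content, the asymptotic stability statements, which is where essentially all the work sits. First I would dispose of (i) by direct computation: $c>d$ (part of $a>b>c>d$) gives $E_{y-ext}>0$, while $E_{y-ext}<E_{x-ext}$ reduces, after clearing denominators, to the obvious inequality $q_x(c-d)<aq_y+cq_x$; and substituting $\bar x=(d+Eq_y)/c$ into the formula for $\bar y$ from \eqref{pp1a} and collecting the coefficient of $E$ yields $\bar y(E)=\frac{aq_y+cq_x}{bc}(E_{y-ext}-E)$, whereas $\hat x(E)=(a-Eq_x)/a=1-E/E_{x-ext}$ is immediate. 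With these identities the equilibrium analysis is purely algebraic: setting the right-hand sides of \eqref{pp1} to zero, the $y$-equation forces $y=0$ or $x=\bar x$, and then the $x$-equation forces $x\in\{0,\hat x\}$ or $y=\bar y$, respectively, so the only candidates are $(0,0)$, $(\hat x,0)$ and $(\bar x,\bar y)$. The sign of $\hat x$ flips at $E=E_{x-ext}$ and, by the formula just derived, the sign of $\bar y$ flips at $E=E_{y-ext}$; moreover $\bar y=0$ forces $a(1-\bar x)=Eq_x$, i.e. $\bar x=\hat x$, which is the coincidence in (iii), and one checks that $E<E_{y-ext}$ implies $E<(c-d)/q_y$, so then $\bar x<1$ and $(\bar x,\bar y)$ really is an interior point. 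Reading these signs in the four ranges of $E$ gives exactly the lists of non-negative equilibria claimed in (ii)--(v).

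For the stability in (ii) I would first linearize \eqref{pp1} at $(\bar x,\bar y)$: using $c\bar x-d-Eq_y=0$ and $a(1-\bar x)-b\bar y-Eq_x=0$, the Jacobian collapses to $\begin{pmatrix}-a\bar x&-b\bar x\\ c\bar y&0\end{pmatrix}$, which has trace $-a\bar x<0$ and determinant $bc\bar x\bar y>0$, so the equilibrium is locally asymptotically stable. For the global statement I would follow the recipe of Proposition \ref{1S-MPA-stablility}: with $\varphi$ as in \eqref{r1C} put $V(x,y)=c\,\varphi(x,\bar x)+b\,\varphi(y,\bar y)$; subtracting the equilibrium relations gives $\dot x/x=-a(x-\bar x)-b(y-\bar y)$ and $\dot y/y=c(x-\bar x)$, and a short computation then yields $\dot V=-ac(x-\bar x)^2\le 0$, the cross terms cancelling exactly because of the weights $c$ and $b$. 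Since the open first quadrant is positively invariant (the axes are invariant), $V>0$ off $(\bar x,\bar y)$, and $V$ is radially unbounded on $(0,\infty)^2$ (it blows up at the axes and at infinity), its sublevel sets are compact and positively invariant, so LaSalle's invariance principle applies: the largest invariant subset of $\{\dot V=0\}=\{x=\bar x\}$ is $\{(\bar x,\bar y)\}$, and every interior trajectory converges to it. Cases (iv) and (v) go the same way with adapted Lyapunov functions: for $E_{y-ext}\le E<E_{x-ext}$ one has $\hat x>0$ and $V(x,y)=c\,\varphi(x,\hat x)+b\,y$ gives $\dot V=-ac(x-\hat x)^2-bc(\bar x-\hat x)\,y\le 0$, while for $E\ge E_{x-ext}$ the function $V(x,y)=c\,x+b\,y$ gives $\dot V=cx\,(a-Eq_x-ax)-b(d+Eq_y)\,y\le-acx^2-b(d+Eq_y)\,y\le 0$ since $a-Eq_x\le 0$; in each case $\dot V$ vanishes only at the asserted equilibrium, which is hence globally asymptotically stable on the relevant quadrant.

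The step I expect to be the real obstacle is the borderline effort $E=E_{y-ext}$ (and, similarly, $E=E_{x-ext}$): there the interior equilibrium merges with $(\hat x,0)$ in a transcritical bifurcation, the Jacobian at $(\hat x,0)$ acquires a zero eigenvalue, and linearization is silent -- so one must arrange the Lyapunov argument to cover this limiting value, using LaSalle to shrink the $\omega$-limit set of each interior orbit to the single point $(\hat x,0)$ (on the set $\{x=\hat x\}$ one has $\dot x=-b\hat x\,y$, which vanishes only when $y=0$). A secondary point requiring care is that ``stable in the first quadrant'' is a global claim: one must verify that the chosen $V$'s are radially unbounded on the appropriate open quadrant so that their sublevel sets are compact there and no trajectory escapes to the boundary, which is exactly what the logarithmic term in $\varphi$ guarantees. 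Alternatively, in case (ii) one could bypass LaSalle altogether: the Dulac function $1/(xy)$ has divergence $-a/y<0$, which excludes periodic orbits, and combined with dissipativity and the Poincar\'e--Bendixson theorem this already forces convergence to $(\bar x,\bar y)$; but the Lyapunov route is cleaner and parallels Proposition \ref{1S-MPA-stablility}.
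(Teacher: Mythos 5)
Your proof is correct, and parts (i)--(iii) together with the local analysis and the interior Lyapunov function $V=c\,\varphi(\cdot,\bar x)+b\,\varphi(\cdot,\bar y)$ for (ii) coincide with what the paper does (the paper's normalization $C_1b=C_2c$ is exactly your choice of weights, and its ``transversality to the line $x=\bar x$'' remark is the same invariance-principle step you make explicit via LaSalle). Where you genuinely diverge is in (iv) and (v): the paper proves global convergence to $(\hat x,0)$, resp.\ $(0,0)$, by a phase-plane decomposition into regions $O_1,O_2,O_3$, showing trajectories leave $O_2$ in finite time, that $O_1$ and $O_3$ are forward invariant, and then squeezing limits out of monotonicity of the coordinate functions. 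You instead exhibit the explicit Lyapunov functions $V=c\,\varphi(x,\hat x)+by$ and $V=cx+by$; I checked the derivative computations, $\dot V=-ac(x-\hat x)^2-bc(\bar x-\hat x)y$ in case (iv) and $\dot V=cx(a-Eq_x-ax)-b(d+Eq_y)y$ in case (v), and they are right, with the sign conditions $\bar x\ge\hat x$ and $a-Eq_x\le 0$ supplied exactly by the hypotheses of the respective cases. Your route is shorter and handles the degenerate boundary value $E=E_{y-ext}$ (where the interior equilibrium collides with $(\hat x,0)$ and linearization fails) uniformly, via the largest invariant subset of $\{x=\hat x\}$; the paper's route avoids any Lyapunov construction on the boundary at the cost of a longer case analysis, and has the minor side benefit of describing the qualitative shape of orbits (monotone entry into the triangle $O_3$). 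The one point worth spelling out if you write this up is the compactness of the sublevel sets $\{V\le V_0\}$ inside $\{x>0,\ y\ge 0\}$ (forced by the logarithmic blow-up of $\varphi(x,\hat x)$ at $x=0^+$ and at infinity), which is what legitimizes LaSalle on the closed half-quadrant containing the boundary equilibrium; you flag this yourself, so nothing is missing.
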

\begin{proof}
(i) Observe that, since $c>d$, one has
$$
0<E_{y-ext} = \frac{a(1-d/c)}{a q_y/c+q_x} < \frac{a}{q_x}=E_{x-ext}.
$$
The formulas for $\bar x(E)$ and $\bar y(E)$ are immediate.\\
\indent (ii) By (i), for  $E\in (0,E_{y-ext})$, we get $\bar y(E)>0$ and 
$\hat x(E)> 1- E_{y-ext}/E_{x-ext}>0$.
The system \eqref{pp1} can be equivalently written (in a similar way to the passage from \eqref{r1A} to \eqref{r1B}) as
$(\dot x,\dot y) = g(x, y)$ 
with $g=(g_1, g_2)$ given by
$$
g_1(x,y) = -x(a(x-\bar{x})+b(y-\bar{y})) \ \ \mbox{ and } \ \ 
g_2(x,y) = cy(x-\bar{x}).
$$
We set $V:(0,+\infty)\times (0,+\infty)\to\R$ as  $V(x,y)=C_1\varphi(x,\bar{x})+C_2\varphi(y,\bar{y})$,
where $\varphi$ is given by (\ref{r1C}) and $C_1, C_2>0$ ara such that $C_1 b = C_2c$. It is clear that $\nabla V(x,y)\cdot g(x,y)=-a(x-\bar{x})^2$, which is negative outside the line $x=\bar{x}$ and the vector field $g$ is transversal to the line at $(x,y)\neq (\bar x, \bar y)$. Hence $(\bar x, \bar y)$ is asymptotically stable in the first quadrant.\\
\indent (iii) can be verified by a straight forward computation.\\
\indent (iv) Let us fix $E \in (E_{y-ext}, E_{x-ext})$ and set 
$$
O_1:=\{(x,y): 0<x< \bar x(E),\, y>0\}, \ \ \ O_2:=\{(x,y): x>\bar x(E),\,y>0\}.
$$
Since $E\geq E_{y-ext}$, we simply have $\bar y = \bar y(E) \leq 0$. Then
\begin{equation}\label{O2-prop}
 g_1(x,y)\leq x (b\bar y - by) \leq  -bxy<0 \ \ \ \mbox{ and } \ \  g_2(x,y)\geq 0 \mbox{ for all }
 (x,y)\in O_2.
\end{equation}
We claim that a trajectory $(x(\cdot),y(\cdot))$ starting from a  point $(x_0,y_0)\in O_2$ passes to $O_1$ in a finite time. Indeed, suppose to the contrary that $(x(t),y(t))\in O_2$ for all $t>0$.
Then, by \eqref{O2-prop}, $\dot y(t)\geq 0$, i.e. $y(t)\geq y_0$ for all $t>0$, and
$$
\dot x(t) = g_1(x,y)\leq -bx(t)y(t) \leq -by_0 x(t) \mbox { for all } t>0, 
$$
which clearly implies that $x(t)\leq x_0  \exp(-by_0t)\rightarrow 0$ as $t\to +\infty$, a contradiction.\\
\indent Observe also that if $x=\bar x$ and $y>0$, then $g_1(x,y)\leq b\bar x\bar y <0$, which means that no trajectory can leave $O_1$ crossing the line $x=\bar x$. It is also clear that no trajectory can leave $O_1$ through the line $x=0$ or $y=0$, since it would contradict the backward in time uniqueness of solutions for the system (\ref{pp1}). Hence, no trajectory leaves $O_1$.\\ 
\indent Furthermore, since $g_1(x,y)=-x(ax+by-a+E q_x)$, we see that $g_1$ is non-negative in the triangle $O_3$ with vertices $(0,0)$, $(\hat{x},0)$ and $(0,(a-Eq_x)/b)$ and negative outside this set, i.e. in $O_1\setminus O_3$ (note that $a-E q_x>0$ due to the assumption $E<E_{x-ext}$). It is also clear that $g_2(x,y)<0$ for all $(x,y)\in O_1$. Therefore no trajectory leaves $O_3$.\\
\indent Now take a trajectory 
$(x(\cdot), y(\cdot))$ in $O_1$. The function $y$ is obviously decreasing and converging to some $\tilde y\geq 0$. If the trajectory starts in $O_3$ then it stays in $O_3$ and $x$, as a non-decreasing function, converges to some $\tilde x>0$. Hence $g(\tilde x, \tilde y)=0$, which implies $(\tilde x, \tilde y)=(\hat x,0)$. If $(x(\cdot),y(\cdot))$ starts in $O_1\setminus O_3$, then $x$ is decreasing, which means that either the trajectory stays in $O_1\setminus O_3$ or enters $O_3$, in both cases we get  $(x(t),y(t)) \to (\hat x,0)$ as $t\to+\infty$. It follows from these considerations that $(\hat x,0)$ is globally asymptotically stable.\\
\indent (v) We adopt the same arguments as in the proof of (iv) to show that any trajectory starting from a point in $O_2$ pass to $O_1$ in a finite time. If  $(x,y)\in O_1$ then $g_1(x,y)<0$ and $g_2(x,y)<0$. Thus, any trajectory $(x(t),y(t))$ of (\ref{pp1}) is convergent to $(0,0)$ as $t\to \infty$.
\end{proof}

The predator catch at the equilibrium is given by
\begin{equation}\label{pp3a}
C_{Pred}(E)=Eq_y\bar{y}(E)=\left\{ 
\begin{array}{ccl}
q_yE(a(c-d)-(aq_y+cq_x)E)/bc&\mbox{ if} & E\leq E_{y-ext},\\
0&\mbox{ if} &E>E_{y-ext}.
\end{array}
\right.
\end{equation}
From the food security point of view the payoff function $C(E)$  that we shall maximize is the total catch. The total catch we define as the sum of the preys catch  and the  predator catch. By Proposition \ref{pp2} we have
\begin{equation}\label{pp4}
C(E)=\left\{\begin{array}{lll}
Eq_x \bar{x}(E)+Eq_y \bar{y}(E)&\mbox{ if }&0< E\leq E_{y-ext},\\
Eq_x \hat{x}(E)&\mbox{ if }& E_{y-ext}<E\leq E_{x-ext},\\
0&\mbox{ if }& E_{x-ext}<E.
\end{array}\right.
\end{equation}
In view of Proposition \ref{pp2} (iii), the functional $C$ is continuous. In general, it is not differentiable at $E=E_{y-ext}$. By (\ref{pp1a}), we have
\begin{equation}\label{pp5}
C(E)=\left\{\begin{array}{lll}
E(G-ED))&\mbox{ if }&0< E\leq E_{y-ext},\\
Eq_x (1-E q_x/a)&\mbox{ if }& E_{y-ext}<E\leq E_{x-ext},\\
0&\mbox{ if }& E_{x-ext}<E.
\end{array}\right.
\end{equation}
where $D=q_y \big(a q_y - (b-c) q_x\big)/bc$
and $G=aq_y (c-d)/bc+dq_x /c$. Clearly, $G>0$ and $D$ may have different signs.\\
\indent If $q_x / q_y \geq a/(b-c)$, then $D\leq 0$ and the function $C(E)$ is increasing on the interval $(0,E_{y-ext})$. So the maximum of $C$ is attained in the interval $(E_{y-ext},E_{x-ext})$.\\
\indent If $q_x/q_y < a/(b-c)$, then $D>0$ and the function $E\mapsto E(G-ED)$ has its maximum at $E_1=G/2D$. We have
$$
E_1<E_{y-ext} \ \ \ \mbox{ if and only if }  \ \ \ 
A (q_x/q_y)^2 + B(q_x/q_y) -a<0,
$$
with $A=bcd/a(c-d)$ and $B= b-c + bc/(c-d)$.
Hence, $E_1<E_{y-ext}$  if and only if $q_x/q_y <Q$ where $Q=(\sqrt{B^2+4Aa}-B)/2A$. Let us also observe that, by a direct computation, we have
\begin{equation}\label{Q-prop}
Q<\frac{a}{B} <\frac{a}{b-c}.
\end{equation}
\indent If $q_x/q_y <Q$ then $C$ has a local maximum at $E=E_1<E_{y-ext}$ and if $Q< q_x/q_y < a/(b-c)$, then the function $C$ is increasing on the interval $(0,E_{y-ext})$.\\ 
\indent The function $E\mapsto Eq_x(1-Eq_x/a)$ has its maximum at $E=E_2=a/2q_x=E_{x-ext}/2$. We easily see that
\[
E_2>E_{y-ext} \ \ \ \mbox{ if and only if  } \  \ \  aq_y >(c-2d)q_x.
\]
If $q_x/q_y<Q$, then $E_2 > E_{y-ext}$. Indeed, if $c-2d\leq 0$ then trivially
$(c-2d)q_x<a q_y$, which implies $E_2>E_{y-ext}$; if $c-2d>0$, then
$B>c-2d$ and, in view of \eqref{Q-prop}, one has
$$
\frac{q_x}{q_y} < Q < \frac{a}{B} < \frac{a}{c-2d},
$$
i.e. $(c-2d)q_x<aq_y$, which gives $E_2>E_{y-ext}$.\\
\indent  Hence, if $q_x/q_y<Q$, then $C$
has two maxima at $E_1$ and $E_2$. We claim that then
$C(E_1)<C(E_2)$, i.e. $G^2/4D < a/4$.    
To show it note first that $G^2/D = u(q_x/q_y)$ where $u:(0,a/(b-c))\to \R$ is given by
$$
u(q)=\frac{\big(a(c-d)+bdq\big)^2}{bc\big(a-(b-c)q\big)}. 
$$
Since $u$ is increasing, we have, in view of \eqref{Q-prop},
$$
u(q)<u(Q)<u(a/B).
$$
Further
\begin{eqnarray*}\label{pp5a}
u(a/B)& =&  \frac{a}{bc} \cdot \frac{\big((c-d) - bd/B\big)^2}{1-(b-c)/B} = \frac{a}{Bbc}\cdot \frac{\big(B(c-d)-bd \big)^2}{B-(b-c)}= a K
\end{eqnarray*}
where 
$$
K=\frac{(2b-c)^2(c-d)^3}{(bc)^2 \big( b-c + bc/(c-d))\big)}.
$$
Note that $d<c$ implies $b-c+bc/(c-d) > 2b-c$ and $(c-d)^3< c^3$, which yields
\begin{eqnarray*}
K & < & \frac{c(2b-c)}{b^2} \leq 1.
\end{eqnarray*}
This proves that $u(q)<u(a/B)<a$, i.e. $C(E_1)<C(E_2)$.

We can sum up the above considerations in the following proposition.
\begin{prop}\label{pp6}
{\em (i)} If $q_x/q_y \geq Q$, then the function $C$ is increasing on the interval $(0,E_{y-ext})$ and it has a maximum in the interval $[E_{y-ext},E_{x-ext})$.\\
{\em (ii)} If $q_x/q_y <Q$, then the function $C$ has two local maxima. The first one $E_1=G/2D$ in the interval $(0,E_{y-ext})$, the second one $E_2=E_{x-ext}/2$ in the interval $(E_{y-ext},E_{x-ext})$ and   $C(E_1)<C(E_2)$.
 \end{prop}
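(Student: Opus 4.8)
The plan is to assemble the facts already established in the paragraphs preceding the statement, organizing them around the sign of $D$ and the two nontrivial pieces of the formula \eqref{pp5}. First I would recall that, by \eqref{pp5}, the catch $C$ is continuous, coincides with the quadratic $E\mapsto E(G-ED)$ on $(0,E_{y-ext}]$, coincides with the concave quadratic $E\mapsto Eq_x(1-Eq_x/a)$ on $[E_{y-ext},E_{x-ext}]$, and vanishes on $[E_{x-ext},\infty)$; here $G>0$ always, while $D$ may have either sign, and the second-piece parabola has vertex $E_2=E_{x-ext}/2$, which is always strictly interior to $[0,E_{x-ext}]$.

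For part (i) I would split into two subcases. If $q_x/q_y\geq a/(b-c)$ then $D\leq 0$, so $E\mapsto E(G-ED)$ has derivative $G-2DE\geq G>0$ and is strictly increasing on $(0,\infty)$; in particular $C$ is increasing on $(0,E_{y-ext})$. If instead $Q\leq q_x/q_y<a/(b-c)$, then $D>0$ and $E\mapsto E(G-ED)$ is a downward parabola with vertex $E_1=G/2D$; by the equivalence ``$E_1<E_{y-ext}\iff q_x/q_y<Q$'' proved above, we now have $E_1\geq E_{y-ext}$, so $C$ is again increasing on $(0,E_{y-ext})$. On $[E_{y-ext},E_{x-ext}]$ the concave quadratic attains its maximum at $E_{y-ext}$ if $E_2\leq E_{y-ext}$ and at $E_2$ otherwise, in either case at a point of $[E_{y-ext},E_{x-ext})$ (using $E_{y-ext}<E_{x-ext}$ from Proposition \ref{pp2}(i) and $E_2<E_{x-ext}$); since $C$ is increasing up to $E_{y-ext}$, continuous there, and vanishes afterwards, that point is a global maximizer of $C$, which is exactly the claim in (i).

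For part (ii), the hypothesis $q_x/q_y<Q$ together with \eqref{Q-prop} forces $q_x/q_y<a/(b-c)$, hence $D>0$, and the first-piece parabola is concave with vertex $E_1=G/2D$, which satisfies $E_1<E_{y-ext}$ by the equivalence above; thus $E_1$ is an interior point of $(0,E_{y-ext})$ at which $C$, coinciding with a concave quadratic in a neighborhood, has a strict local maximum. On the other piece, it was recorded that $q_x/q_y<Q$ implies $E_2=E_{x-ext}/2>E_{y-ext}$, so $E_2\in(E_{y-ext},E_{x-ext})$ is likewise a strict local maximum. Finally $C(E_1)<C(E_2)$ is, by direct substitution, the inequality $G^2/4D<a/4$; writing $G^2/D=u(q_x/q_y)$ and using that $u$ is increasing, together with \eqref{Q-prop}, the evaluation $u(a/B)=aK$ and the bound $K<c(2b-c)/b^2\leq 1$ -- all carried out above -- one gets $u(q_x/q_y)<u(a/B)=aK<a$, i.e. $C(E_1)<C(E_2)$.

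I expect that there is no new idea to find here: the only genuinely delicate point is the bookkeeping in part (i), where one must split on the sign of $D$, locate $E_2$ relative to $E_{y-ext}$ in the case $D>0$, and invoke continuity of $C$ at the kink $E=E_{y-ext}$ so that ``$C$ increasing on the left piece'' plus ``$C$ concave on the right piece'' combine into a statement about $C$ on the whole ray $(0,\infty)$. Everything quantitative -- the two equivalences locating $E_1$ and $E_2$, and the estimate $K<1$ -- has already been established, so the proof is pure assembly and no further computation is needed.
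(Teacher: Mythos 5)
Your proof is correct and follows essentially the same route as the paper, which presents Proposition \ref{pp6} precisely as a summary of the case analysis on the sign of $D$, the two equivalences locating $E_1$ and $E_2$ relative to $E_{y-ext}$, and the estimate $u(q_x/q_y)<u(a/B)=aK<a$ established in the preceding paragraphs. If anything, your bookkeeping in part (i) --- explicitly allowing the subcase $E_2\leq E_{y-ext}$, so that the maximizer may sit at the left endpoint of $[E_{y-ext},E_{x-ext})$ --- is slightly more careful than the paper's prose, which at one point writes the open interval.
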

Thus, it appears that whatever are the parameters in this prey-predator model, the total catch is maximal for the fishing effort that (sometimes significantly) exceeds the prey extinction threshold $E_{y-ext}$. It means that such systems are naturally susceptible to over-exploitation meaning the risk of prey extinction. In such fisheries implementing a protection area may be considered as a preventive measure.



\section{The prey-predator model with MPA}
We consider the prey-predator model with MPA, that is \eqref{m1}, with the normalized capacity $K=1$,
 \begin{equation}\label{m1A}
  \left\{
        \begin{array}{l}
\dot x_1= a  x_1(1-x_1/(1-R))- b x_1 y_1 /(1-R)-Eq_x  x_1+m_x ((1-R)x_2-Rx_1)  \\
\dot x_2= a  x_2(1-x_2/R)- bx_2y_2/R -m_x ((1-R)x_2-Rx_1) \\
  \dot y_1= c x_1 y_1/(1-R)- d  y_1-Eq_y y_1+m_y  ((1-R)y_2-Ry_1)\\
 \dot y_2= c x_2y_2/R- d  y_2-m_y  ((1-R)y_2-Ry_1)
        \end{array} \right.
 \end{equation}
where all the coefficients are like in \eqref{m1}. We shall prove the uniqueness of positive steady state solutions for the system (even for a broader class of problems). Next we show the existence of the positive steady state. We also discuss the asymptotic stability of the positive steady states.

We change variables in (\ref{m1A})
\[
\frac{x_1}{1-R}\to x_1,\;\; \frac{y_1}{1-R}\to y_1,\;\;\frac{x_2}{R}\to x_2,\;\;\frac{y_2}{R}\to y_2
\]
and obtain
\begin{equation}\label{m2}
\left\{\begin{array}{l}
\dot x_1= a  x_1(1-x_1)- b  x_1y_1-Eq_x  x_1+m_x R(x_2-x_1)  \\
\dot x_2= a  x_2(1-x_2)- b  x_2 y_2-m_x (1-R)(x_2-x_1)  \\
\dot y_1= c  x_1 y_1- d  y_1-Eq_y y_1+m_y  R(y_2-y_1)\\
\dot y_2= c  x_2 y_2- d  y_2-m_y  (1-R)(y_2-y_1).
\end{array} \right.
 \end{equation}
\subsection{Uniqueness  of positive steady states}
Let us consider a generalization of \eqref{m2} given by
\begin{equation}\label{system-2sp-MPA}
\left\{ \begin{array}{l}
\dot x_1 = a x_1 \big (1\!-\!\alpha(x_1)\big) \!-\!\beta (y_1) x_1 \!-\! E q_x x_1 \!+\! m_{x}R \!\cdot\! \big(x_2\!-\!x_1\big),\\
\dot x_2 = a x_2 \big (1- \alpha(x_2)\big)  -\beta (y_2) x_2 - m_{x}(1-R) \cdot \big(x_2-x_1\big)\\
\dot y_1 = \gamma (x_1) y_1-d y_1- E q_y y_1 + m_{y}R \cdot (y_2-y_1),\\
\dot y_2 = \gamma (x_2) y_2- d y_2 - m_{y}(1-R) \cdot (y_2-y_1),
\end{array} \right.
\end{equation}
where $\alpha, \beta, \gamma:\R\to\R$ are continuous increasing bijections with
$\alpha(0)=\beta(0)=\gamma(0)=0$. It is clear that it has the form of the following equations
\begin{equation}\label{system_0}
 \left\{ \begin{array}{l}
\dot x_1 = x_1 \big( \theta_1 - \alpha(x_1) - \beta(y_1) \big) + \mu_x x_2 = 0\\
\dot x_2 = x_2 \big( \theta_2 - \alpha(x_2) - \beta(y_2) \big) + \rho \ mu_x x_1 = 0\\
\dot y_1 = y_1 \big(\gamma(x_1) -  \nu_1 \big)   + \mu_y y_2 = 0\\
\dot y_2 = y_2 \big(\gamma(x_2) - \nu_2 \big) + \rho \mu_y y_1  = 0.
\end{array} \right.
\end{equation}
with $\alpha$ multiplied by $a$ but denoted again by $\alpha$ and
\begin{equation}\label{coeff1}
\theta_1 = a - Eq_x - m_{x}R, \ \ \theta_2 = a -m_{x}(1-R),  \ \ \rho = \frac{1-R}{R},
\end{equation}
\begin{equation}\label{coeff2}
\mu_x = m_{x} R, \ \ \mu_y = m_{y}R, .
\end{equation}
\begin{equation}\label{coeff3}
\nu_1 = d + Eq_y + m_{y}R, \ \ \nu_2 = d + m_{y}\rho R.
\end{equation}
\noindent  We consider positive steady states of the system, i.e. positive solutions $(x_1, x_2, y_1, y_2)$ of the following system
\begin{equation}\label{system_1_normalized}
 \left\{ \begin{array}{l}
x_1 \big( \theta_1 - \alpha(x_1) - \beta(y_1) \big) + \mu_x x_2 = 0\\
x_2 \big( \theta_2 - \alpha(x_2) - \beta(y_2) \big) + \rho \mu_x x_1 = 0\\
y_1 \big(\gamma(x_1) -  \nu_1 \big)   + \mu_y y_2 = 0\\
y_2 \big(\gamma(x_2) - \nu_2 \big) + \rho \mu_y y_1  = 0.
\end{array} \right.
\end{equation}
\indent Below we show that positive equilibria, if exist, are unique. Its proof gives us also some hints how to search for positive equilibria in the general setting.
\begin{thm}\label{necessary-condition}
Suppose that $\alpha, \beta, \gamma:\R\to\R$ are continuous increasing bijections with
$\alpha(0)=\beta(0)=\gamma(0)=0$, $\mu_x, \mu_y, \rho>0$ and $\theta_1,\theta_2\in\R$. Then the system  \eqref{system_1_normalized} has at most one positive solution.
\end{thm}
\begin{proof} Let $(x_1, x_2, y_1, y_2)$ be a positive solution to \eqref{system_1_normalized}. 
Since $(y_1, y_2)$ is a nontrivial solution of the system consisting of the third and forth equations in \eqref{system_1_normalized} and $\mu_y, \rho>0$, we get $\gamma(x_1)-\nu_1 <0$, $\gamma(x_2) - \nu_2<0$ and
\begin{equation}\label{x-1-x-2-relation}
(\gamma(x_1) - \nu_1)(\gamma(x_2)- \nu_2) = \rho \mu_y^2.
\end{equation}
Therefore $\nu_1 \nu_2 > \big( \nu_1-\gamma(x_1) \big) \big( \nu_2- \gamma(x_2) \big) = \rho  \mu_y^2$.
We set
$$
x^{*}_1= \gamma^{-1}\left( \nu_1 - \rho \mu_y^2/ \nu_2 \right) \ \mbox{ and }  x^{*}_2 = \gamma^{-1}
\left( \nu_2 - \rho \mu_y^2/ \nu_1 \right);
$$
and define a function $g:[0, x^{*}_1] \to [0,  x^{*}_2]$  by
$$
g(s)= \gamma^{-1} \left( \nu_2 - \frac{\rho \mu_y^{2}}{\nu_1 - \gamma(s)} \right), \, s\in [0,x^{*}_1].
$$
 Clearly \eqref{x-1-x-2-relation} implies $x^{*}_2 = g(x^{*}_1)$. It is easy to verify that $g$ is well-defined, decreasing and $g(0)= x^{*}_2$
and $g( x^{*}_1)=0$.\\
\indent It follows from the first and second equations that
\begin{equation}\label{formulas-for_y1-y2}
y_1 = \beta^{-1}\left( \theta_1-\alpha(x_1) +\mu_x \frac{x_2}{x_1} \right) \ \mbox{ and } \ y_2 = \beta^{-1}\left( \theta_2 - \alpha(x_2) + \rho \mu_x \frac{x_1} {x_2} \right)
\end{equation}
and substituting it into the third equation we get
\begin{equation}\label{x1-key-equation}
\big(\nu_1-\gamma(x_1)\big)\beta^{-1}\left( \theta_1-\alpha(x_1) +\frac{\mu_x g(x_1)}{x_1} \right) = \mu_y \beta^{-1}\left( \theta_2 - \alpha(g(x_1)) + \frac{\rho \mu_x x_1} {g(x_1)} \right)
\end{equation}
From the formula for $y_1$ in \eqref{formulas-for_y1-y2} and the fact that $y_1>0$, we get $\theta_1-\alpha(x_1)+\mu_x x_2/x_1>0$.\\
\indent Define $h_1: (0, x^{*}_1 ] \to \R$ and $h_2:[0, x^{*}_1)\to\R$ by the formulas
and
\begin{eqnarray*}    
h_1 (s) & = & \big(\nu_1-\gamma(s)\big)\beta^{-1}\left( \theta_1-\alpha(s) +\frac{\mu_x g(s)}{s} \right), \, s\in (0, x^{*}_1],\\
h_2 (s) & = & \mu_y \beta^{-1}\left( \theta_2 - \alpha(g(s)) + \frac{\rho \mu_x s} {g(s)} \right), \, s\in [0,  x^{*}_1).
\end{eqnarray*}
Clearly \eqref{x1-key-equation} has the form $h_1(x_1)=h_2(x_1)$. Note that
$$
\lim_{s\to 0^+} h_1(s)=+\infty.
$$
Since $\alpha$ is increasing and $g$ is decreasing,  the function $s\mapsto \theta_1 - \alpha(s)+\mu_x g(s)/s$ is decreasing on the interval $(0, x^{**})$ where
$$
x^{**} = \sup \big\{s\in (0, x^{*}_1] \mid \theta_1 - \alpha(s)+\mu_x g(s)/s >0\big\}
$$
We also have that $\nu_1 - \gamma(s) > \nu_1 -\gamma( x^{*}_1)>0$ for all $s\in (0, x^{*}_{1})$, which shows that
$h_1$ is decreasing on the interval $(0, x^{**})$. In the same way we easily see that $h_2$ is increasing and 
$$
\lim_{s\to (x^{*}_1)^-} h_2(s)= +\infty.
$$
\indent Observe that, since $y_1>0$, we have $x_1\in (0, x^{**})$. Therefore $x_1$ is a unique solution of \eqref{x1-key-equation} and, consequently, $(x_1, x_2, y_1, y_2)$ is the unique positive solution for \eqref{system-2sp-MPA}. \end{proof}

\begin{remark}\label{exitstence-rem} {\em
(a) The condition $\nu_1\nu_2>\rho \mu_y^2$, which is necessary for the existence of the positive steady states of \eqref{system-2sp-MPA}, is satisfied for the coefficients defined by \eqref{coeff1} and \eqref{coeff2}.\\
\indent (b) Observe that actually the proof of Theorem \eqref{necessary-condition} provides a sort of method for finding positive steady states.  Indeed, if $x^{**} = x^*_1$, then there exists the unique $x_1\in (0, x^*_1)$ such that $h_1(x_1)=h_2(x_1)$ and, with 
$x_2=g(x_1)$ and $y_1$, $y_2$ given by \eqref{formulas-for_y1-y2}, $(x_1, x_2, y_1, y_2)$ is the unique positive solution of \eqref{system_1_normalized}. In the case $x^{**}<\ x^{*}_{1}$ and
$\theta_2 - \alpha(g( x^{**})) + \rho \mu_x  x^{**}/g( x^{**})>0$,
then there is a unique $x_1\in (0, x^{**})$ such that $h_1(x_1)=h_2(x_1)$, which yields also the existence of a positive solution of \eqref{system_1_normalized}. 
} \end{remark}
\begin{remark}\label{extinction-order} {\em
Let $(x_1, x_2, y_1, y_2)$  be a solution of the system \eqref{system_1_normalized} with the parameters like in Theorem \ref{necessary-condition}. It follows from the first equation that if $x_1=0$, then $x_2=0$ and, from the second we see that, if $x_2=0$, then $x_1=0$. Moreover, $x_1=x_2=0$ entails $y_1=y_2=0$. Indeed, from the third and fourth equations, we have $-\nu_1 y_1 + \mu_y y_2=0$ and $-\nu_2 y_2 + \rho \mu_y y_1=0$, which implies $y_1=y_2=0$ due to the fact that $\nu_1 \nu_2 < \rho \mu_y^2$.} \end{remark}


\subsection{Existence of positive steady states}
If $m_x =m_y  =0$ then system (\ref{m2}) splits into two independent systems. The first one describes fish densities $x_1(t),\,y_1(t)$ outside the MPA and the second one describes  fish densities $x_2(t),\,y_2(t)$ inside the MPA. By Proposition \ref{pp2}, the steady state $(\bar{x}_1, \bar{y}_1)$ of the system describing fish densities  outside the MPA is given by
 \begin{equation}\label{pp2A}
\bar{x}_1=\frac{d+Eq_y}{c},\;\;\bar{y}_1=\frac{a(c-d)-E(aq_y+cq_x)}{bc}
 \end{equation}
and the steady state $(\bar{x}_2,\bar{y}_2)$ of the system describing fish densities inside the MPA is given by
 \begin{equation}\label{pp2B}
\bar{x}_2=\frac{d}{c},\;\;\bar{y}_2 =\frac{a(c-d)}{bc}.
 \end{equation}
Obviously 
\[
\bar{x}_1>\bar{x}_2>0,\;\;\bar{y}_1<\bar{y}_2, \;\;\;\bar{y}_2>0.
\]
Hence, the system (\ref{m2}) is equivalent to
\begin{equation}\label{m3}
\left\{ \begin{array}{l}
\dot x_1 = x_1(- a (x_1-\bar{x}_1)- b (y_1-\bar{y}_1))+m_x R(x_2-x_1)  \\
\dot  x_2=x_2(- a  (x_2-\bar{x}_2)- b  (y_2-\bar{y}_2))-m_x (1-R)(x_2-x_1)  \\
\dot  y_1=cy_1   (x_1-\bar{x}_1)+m_y   R(y_2-y_1)\\
\dot  y_2=cy_2  (x_2 -\bar{x}_2)-m_y  (1-R)(y_2-y_1).
\end{array} \right.
 \end{equation}
If we put
  \[
  A=\frac{ b }{ a }, \ \ B=\frac{m_y  R}{ c },\ \ C=\frac{m_x R}{ a },\ \ D=\frac{1-R}{R},
  \]
then the system 
\begin{equation}\label{m4}
\left\{
\begin{array}{l}
x_1(- a (x_1-\bar{x}_1)- b (y_1-\bar{y}_1))+m_x R(x_2-x_1)=0  \\
x_2(- a  (x_2-\bar{x}_2)- b  (y_2-\bar{y}_2))-m_x (1-R)(x_2-x_1)=0  \\
cy_1   (x_1-\bar{x}_1)+m_y   R(y_2-y_1)=0\\
cy_2   (x_2 -\bar{x}_2)-m_y  (1-R)(y_2-y_1)=0
\end{array} \right.
\end{equation}
can be equivalently written as
\begin{equation}\label{m4A}
\frac{1}{C}x_1(x_1-(\bar{x}_1+A\bar{y}_1-Ay_1-C))=x_2,
\end{equation}
\begin{equation}\label{m4B}
\frac{1}{CD}x_2(x_2-(\bar{x}_2+A\bar{y}_2-Ay_2-CD))=x_1,
\end{equation}
\begin{equation}\label{m4C}
y_1(x_1-\bar{x}_1-B)+y_2B=0,
\end{equation}
\begin{equation}\label{m4D}
 y_2(x_2-\bar{x}_2-BD)+y_1BD=0.
\end{equation}
Note that $A$, $B$, $C$, $D$, $\bar x_2$, $\bar y_2$ do not depend on $E$ and only $\bar x_1$ and $\bar y_1$ do. By a geometric analysis of the above system we shall show the following existence result. 
\begin{thm}\label{EQ1.1}
 Suppose that  $a>b>c>d>0$, $m_x ,\,m_y  >0$ and $\bar{x}_1,\,\bar{y}_1,\,\bar{x}_2,\,\bar{y}_2$ are given by \eqref{pp2A}-\eqref{pp2B}.  If $R\in(0,1)$, then there exists $E_{y-ext}(R)>E_{y-ext}$ such that, for each $E\in (0, E_{y-ext}(R))$, the system of equations \eqref{m4} has a unique positive solution $(\tilde{x}_1,\tilde{x}_2,\tilde{y}_1,\tilde{y}_2)$ and if $E>E_{y-ext}(R)$ then the system \eqref{m4} has no positive solutions.
 Moreover, 
 we have $\tilde{y}_1<\tilde{y}_2$, $\tilde{x}_2>\bar{x}_2$ and $\tilde{x}_1<\bar{x}_1$.
 \end{thm}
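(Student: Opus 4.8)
The plan is to read off uniqueness from Theorem~\ref{necessary-condition} and to treat existence, the threshold $E_{y-ext}(R)$, and the three inequalities by making the geometric reduction of Remark~\ref{exitstence-rem}(b) explicit for \eqref{m4}. Uniqueness is immediate: after the changes of variables carrying \eqref{m1A} into \eqref{m2} and \eqref{m2} into \eqref{m3}, the system \eqref{m4} is the instance of \eqref{system_1_normalized} with the increasing bijections $\alpha(x)=ax$, $\beta(y)=by$, $\gamma(x)=cx$, with $\mu_x=m_xR$, $\mu_y=m_yR$, $\rho=(1-R)/R$ all positive, and with $\nu_1\nu_2>\rho\mu_y^{2}$ holding by Remark~\ref{exitstence-rem}(a); so Theorem~\ref{necessary-condition} already yields at most one positive solution, for every $E$.

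For existence I would read \eqref{m4C}--\eqref{m4D} as a homogeneous linear system in $(y_1,y_2)$: a positive $(y_1,y_2)$ forces its coefficient matrix to be singular with negative diagonal entries, i.e.
\[
(\bar x_1+B-x_1)(\bar x_2+BD-x_2)=B^{2}D,\qquad \bar x_1+B-x_1>0,\quad \bar x_2+BD-x_2>0,
\]
together with $y_2/y_1=(\bar x_1+B-x_1)/B$. The first relation carves out a decreasing branch $x_2=g(x_1)$, $g\colon[0,x_1^{*}]\to[0,x_2^{*}]$ with $x_1^{*}=\bar x_1+B-B^{2}D/(\bar x_2+BD)>0$, the positivity being exactly the inequality of Remark~\ref{exitstence-rem}(a). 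Substituting $x_2=g(x_1)$ into \eqref{m4A} and \eqref{m4B} expresses $y_1=\Psi_1(x_1)$ and $y_2=\Psi_2(x_1)$ as explicit rational functions of $x_1$, and a direct check shows $\Psi_1$ strictly decreasing with $\Psi_1(0^{+})=+\infty$, while $\Psi_2$ is strictly increasing with $\Psi_2\to+\infty$ as $x_1\to x_1^{*}$ from below. Matching the two expressions for $y_2$ leaves the single equation $h_1(x_1)=h_2(x_1)$ with $h_1(s)=(\bar x_1+B-s)\Psi_1(s)$ and $h_2(s)=B\Psi_2(s)$: since $h_1$ decreases from $+\infty$ and $h_2$ increases to $+\infty$, it has exactly one root in the interval $(0,x^{**})$, $x^{**}:=\sup\{s\le x_1^{*}:\Psi_1(s)>0\}$, precisely when $\Psi_2(x^{**})>0$ (automatically so if $x^{**}=x_1^{*}$), and that root reconstructs a genuine positive steady state. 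This is the existence criterion, i.e.\ the concrete form of the dichotomy in Remark~\ref{exitstence-rem}(b).

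The three inequalities then follow cheaply. The point $(\bar x_1,\bar x_2)$ satisfies $(\bar x_1+B-\bar x_1)(\bar x_2+BD-\bar x_2)=B\cdot BD=B^{2}D$, hence lies on the branch, i.e.\ $g(\bar x_1)=\bar x_2$; since $E>0$ we have $\bar x_1=(d+Eq_y)/c>d/c=\bar x_2$ and $\bar y_1<\bar y_2$, and plugging $g(\bar x_1)=\bar x_2$ into the formulas for $\Psi_1,\Psi_2$ gives
\[
h_1(\bar x_1)-h_2(\bar x_1)=B\Bigl(\bar y_1-\bar y_2-\tfrac{C(\bar x_1-\bar x_2)}{A\bar x_1}-\tfrac{CD(\bar x_1-\bar x_2)}{A\bar x_2}\Bigr)<0.
\]
Since $h_1-h_2$ is strictly decreasing on $(0,x^{**})$ with its unique zero at $\tilde x_1\in(0,x^{**})$, the inequality $(h_1-h_2)(\bar x_1)<0$ forces $\tilde x_1<\bar x_1$ (immediately when $\bar x_1>x^{**}$, and from the monotonicity when $\bar x_1\le x^{**}$); hence $\tilde x_2=g(\tilde x_1)>g(\bar x_1)=\bar x_2$ because $g$ is decreasing, and $\tilde y_2/\tilde y_1=(\bar x_1+B-\tilde x_1)/B>1$, i.e.\ $\tilde y_1<\tilde y_2$.

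It remains to handle the dependence on $E$. Only $\bar x_1(E)=(d+Eq_y)/c$ (increasing) and $\bar y_1(E)$ (decreasing, vanishing at $E_{y-ext}$) depend on $E$; everything else in \eqref{m4A}--\eqref{m4D} is $E$-free. Writing $\psi(E)$ for the quantity $\Psi_2(x^{**}(E);E)$ occurring in the existence criterion (with value $+\infty$ when $x^{**}(E)=x_1^{*}(E)$), so that \eqref{m4} has a positive solution iff $\psi(E)>0$, I would set $E_{y-ext}(R):=\sup\{E>0:\psi(E)>0\}\in(0,+\infty]$ and then prove: (i) $\psi$ is continuous in $E$, also through the transition between the two cases, where $\psi\to+\infty$; (ii) $\psi(E)>0$ for every $E\le E_{y-ext}$ --- here $\bar y_1\ge0$ while the reserve value $\bar y_2>0$ still keeps the criterion satisfied, and $m_y>0$ is exactly what makes this go through --- which with (i) pushes positivity beyond $E_{y-ext}$ and hence gives $E_{y-ext}(R)>E_{y-ext}$; and (iii) $\{E>0:\psi(E)>0\}$ is an interval. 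Granting (i)--(iii), every $E\in(0,E_{y-ext}(R))$ admits a (unique) positive solution; and if $E_{y-ext}(R)<+\infty$ then, by continuity, $\psi(E_{y-ext}(R))=0$, so the only candidate there has $y_1=0$ and there is no positive solution for $E\ge E_{y-ext}(R)$. The hard part is (iii): raising $E$ raises $\bar x_1$, which through $g$ tends to \emph{help} --- it enlarges $\Psi_1$ and shifts $x^{**}$ rightward --- but lowers $\bar y_1$, which hurts, so $\psi$ is not visibly monotone and (iii) does not come from a one-line comparison. I would settle it either by a finer geometric bookkeeping, tracking how the abscissae $\tilde x_1(E)$ and $x^{**}(E)$ cross as $E$ varies, or by an ODE-level comparison on the flow \eqref{m2} showing that once the predator is globally driven to extinction at some $E_0$ it is driven to extinction for every $E>E_0$; step (ii) is the second delicate point, for the same reason.
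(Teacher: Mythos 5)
Your reduction is sound and, in substance, the same as the paper's: uniqueness for every $E$ via Theorem \ref{necessary-condition}, the singular linear system \eqref{m4C}--\eqref{m4D} cutting out the decreasing hyperbola branch $x_2=g(x_1)$, and the scalar equation $h_1=h_2$ with $h_1$ decreasing from $+\infty$ and $h_2$ increasing to $+\infty$. Your derivation of the three inequalities is correct and in fact more explicit than anything in the paper: the observation that $(\bar x_1,\bar x_2)$ itself lies on the hyperbola, so $g(\bar x_1)=\bar x_2$, combined with $\tilde y_2/\tilde y_1=(\bar x_1+B-\tilde x_1)/B$ and the sign of $h_1-h_2$ at $\bar x_1$ (using $\bar y_1<\bar y_2$ and $\bar x_1>\bar x_2$ for $E>0$), does yield $\tilde x_1<\bar x_1$, $\tilde x_2>\bar x_2$ and $\tilde y_1<\tilde y_2$.

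The genuine gap is exactly where you flag it: your steps (ii) and (iii) --- that positive solutions persist for all $E\le E_{y-ext}$ and a bit beyond, and that the set of admissible $E$ is an interval --- are the actual content of the threshold assertion, and you leave both as ``I would settle it by\dots''. Without them you obtain neither $E_{y-ext}(R)>E_{y-ext}$ nor the dichotomy between $E<E_{y-ext}(R)$ and $E>E_{y-ext}(R)$. The paper closes this by tracking the parabolas $P_1(0)$ and $P_2(0)$ (the loci \eqref{m4A} with $y_1=0$ and \eqref{m4B} with $y_2=0$) against the hyperbola $H$, rather than a scalar criterion $\psi(E)$. Lemma \ref{L1} shows that the intersection of the decoupled parabolas $P_1(\bar y_1)$ and $P_2(\bar y_2)$ is not below $H$, and Lemma \ref{L2} pushes $P_1(0)\cap P_2(0)$ strictly above $H$ whenever $A\bar y_1\ge 0$, i.e. $E\le E_{y-ext}$; that is your step (ii). For step (iii), observe that $P_2(0)$ is independent of $E$ and that the vertex parameter $\bar x_1+A\bar y_1-C$ of $P_1(0)$ is strictly decreasing in $E$ (its derivative is $-q_x/a$), so the intersection point $(c_1(E),c_2(E))$ slides monotonically downward along the fixed parabola $P_2(0)$, while $H$ moves monotonically outward as $E$ grows (Lemma \ref{L3}); both effects act in the same direction, so the set of $E$ for which $(c_1(E),c_2(E))$ lies above $H$ is an interval whose right endpoint is $E_{y-ext}(R)$. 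This monotone-motion argument is precisely the ``finer geometric bookkeeping'' you anticipated but did not carry out; without it, or an equivalent monotonicity statement for your $\psi$, the main claim of the theorem remains unproved.
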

\noindent In the proof we use an observation that the system \eqref{m4C}-\eqref{m4D} of linear equations with parameters $x_1$, $x_2$ has a positive solution $(y_1,y_2)$ ($y_1>0$, $y_2>0$) if and only if
\begin{equation}\label{m5C}
(x_1-\bar{x}_1-B)(x_2-\bar{x}_2-BD)=B^2D ,\ \ \ 0\leq x_1<\bar{x}_1+B,\ \ \ 0\leq x_2<\bar{x}_2+BD.
\end{equation}    
Then we shall be looking for positive $y_1$ and $y_2$ such that the parabolas given by  \eqref{m4A} and \eqref{m4B} intersect on the hyperbola given by \eqref{m5C}. To this end we shall need the following lemmata.
\begin{Lemma}\label{L1}
The system of equations
\begin{equation}\label{R1}
x_2=\frac{1}{C}x_1((x_1-\bar{x}_1)+C),
\end{equation}
\begin{equation}\label{R2}
x_1=\frac{1}{CD}x_2((x_2-\bar{x}_2)+CD),
\end{equation}
has a unique solution $(x^{*}_{1,0}, x^{*}_{2,0})\in(0,\infty)\times (0,+\infty)$ and 
$$
(x^{*}_{1,0}-\bar{x}_1-B)(x^{*}_{2,0}-\bar{x}_2-BD)<B^2D,
$$
which means that $(x^{*}_{1,0}, x^{*}_{2,0})$ lies above the hyperbola given by \eqref{m5C}, i.e. in the unbounded component of the positive quadrant that is cut out by the hyperbola. 
\end{Lemma}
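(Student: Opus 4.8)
The plan is to separate Lemma~\ref{L1} into two independent claims: (i) existence and uniqueness of the positive solution of \eqref{R1}--\eqref{R2}, and (ii) the strict inequality; and to treat (ii) by rewriting the two equations so that $x^{*}_{1,0}-\bar{x}_1$ and $x^{*}_{2,0}-\bar{x}_2$ become explicit expressions in a single ratio, after which the inequality collapses to an elementary algebraic identity.

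For (i) I would observe that \eqref{R1}--\eqref{R2} is literally of the form \eqref{r3} treated in Proposition~\ref{alg-existence-prop}, with the four parameters of that proposition taken to be $\bar{x}_1-C$ (the abscissa of the vertex of the first parabola), $C$, $\bar{x}_2-CD$, and $CD$; the last two of these are $C>0$ and $CD>0$, so the proposition applies. If $\bar{x}_1\ge C$ or $\bar{x}_2\ge CD$ the first alternative of the proposition holds; otherwise $\bar{x}_1-C<0$ and $\bar{x}_2-CD<0$, and I would check the remaining condition $C\cdot CD>(\bar{x}_1-C)(\bar{x}_2-CD)$, which is immediate from $0<C-\bar{x}_1<C$ and $0<CD-\bar{x}_2<CD$ (using $\bar{x}_1,\bar{x}_2>0$). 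This gives the unique positive solution $(x^{*}_{1,0},x^{*}_{2,0})$.

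For (ii), write $p=x^{*}_{1,0}>0$ and $q=x^{*}_{2,0}>0$. Clearing denominators in \eqref{R1} and \eqref{R2} gives $p(p-\bar{x}_1)=C(q-p)$ and $q(q-\bar{x}_2)=CD(p-q)$, hence
\begin{equation*}
x^{*}_{1,0}-\bar{x}_1=C\Big(\frac{q}{p}-1\Big),\qquad x^{*}_{2,0}-\bar{x}_2=CD\Big(\frac{p}{q}-1\Big).
\end{equation*}
Put $u=q/p-1$ and $v=p/q-1$. I would then use the two elementary identities $u+v=(p-q)^2/(pq)\ge 0$ and $uv=-(u+v)$ to obtain
\begin{equation*}
(x^{*}_{1,0}-\bar{x}_1-B)(x^{*}_{2,0}-\bar{x}_2-BD)=D\big(C^{2}uv-BC(u+v)+B^{2}\big)=B^{2}D-CD(B+C)(u+v).
\end{equation*}
Since $B,C,D>0$, the desired strict inequality is equivalent to $u+v>0$, i.e. to $p\ne q$. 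But $p=q$ would force $p=\bar{x}_1$ and $q=\bar{x}_2$ by the displayed formulas, hence $\bar{x}_1=\bar{x}_2$; this contradicts $\bar{x}_1=(d+Eq_y)/c>d/c=\bar{x}_2$ for the fishing efforts $E>0$ occurring in Theorem~\ref{EQ1.1}. Therefore $u+v>0$ and $(x^{*}_{1,0}-\bar{x}_1-B)(x^{*}_{2,0}-\bar{x}_2-BD)<B^{2}D$. To match the geometric phrasing, I would finally note that at the origin $F(x_1,x_2):=(x_1-\bar{x}_1-B)(x_2-\bar{x}_2-BD)$ equals $(\bar{x}_1+B)(\bar{x}_2+BD)>B^{2}D$, so the origin lies on the opposite side of the level curve $F=B^{2}D$; hence ``above the hyperbola'', i.e. the unbounded component cut out by \eqref{m5C}, is precisely the region $F<B^{2}D$, where we have just placed $(x^{*}_{1,0},x^{*}_{2,0})$.

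The one genuinely delicate point -- the main obstacle -- is the \emph{strictness}: every computation above is non-strict until $p=q$ is excluded, and excluding it is exactly where the standing assumption $\bar{x}_1>\bar{x}_2$ (equivalently $E>0$) enters. At $E=0$ the point would lie on the hyperbola and the inequality would fail to be strict, so the lemma tacitly relies on the context of positive fishing effort.
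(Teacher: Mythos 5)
Your proposal is correct, and its core computation is the same as the paper's: both express $x^{*}_{1,0}-\bar{x}_1=C\bigl(x^{*}_{2,0}/x^{*}_{1,0}-1\bigr)$ and $x^{*}_{2,0}-\bar{x}_2=CD\bigl(x^{*}_{1,0}/x^{*}_{2,0}-1\bigr)$ from \eqref{R1}--\eqref{R2} and reduce the product to $B^2D+D(C^2+BC)\bigl(2-q/p-p/q\bigr)$, which is $\leq B^2D$ since $q/p+p/q\geq 2$.

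You go beyond the paper's proof in two respects, both to the good. First, you actually prove the existence and uniqueness of the positive solution by matching \eqref{R1}--\eqref{R2} to the form \eqref{r3} and invoking Proposition \ref{alg-existence-prop}; the paper's proof of Lemma \ref{L1} takes this for granted, and your verification of the condition $DB>AC$ in the case where both vertex abscissae are negative (using only $\bar{x}_1,\bar{x}_2>0$) is correct. Second, and more substantively, the paper's displayed computation only yields the \emph{non-strict} inequality $\leq B^2D$, whereas the lemma asserts a strict one; you correctly isolate the missing step, namely that equality forces $x^{*}_{1,0}=x^{*}_{2,0}$, hence $x^{*}_{1,0}=\bar{x}_1$ and $x^{*}_{2,0}=\bar{x}_2$, hence $\bar{x}_1=\bar{x}_2$, which is excluded since $\bar{x}_1=(d+Eq_y)/c>d/c=\bar{x}_2$ for the efforts $E>0$ considered in Theorem \ref{EQ1.1}. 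Your observation that at $E=0$ the point would sit exactly on the hyperbola is accurate and explains why the strictness is context-dependent. (For the way the lemma is used in the proof of Theorem \ref{EQ1.1} --- ``not below $H$'' --- the non-strict version suffices, which is presumably why the paper did not bother; but as a proof of the lemma as stated, yours is the complete one.) One cosmetic caveat: the region $\{F<B^2D\}$ is contained in, but not literally equal to, the unbounded component cut out by the arc \eqref{m5C} (that component also contains the far region beyond the other branch of the level set where $F>B^2D$ again); only the inclusion you need is true, so this does not affect the argument.
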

\begin{proof} We have 
$x^{*}_{1,0}-\bar{x}_1=C(x^*_{2,0}/x^{*}_{1,0}-1)$ and $x^{*}_{2,0}-\bar{x}_2=CD(x^{*}_{1,0}/x^{*}_{2,0}-1)$, which gives
\begin{eqnarray*}
& & (x^{*}_{1,0}-\bar{x}_1-B)(x^{*}_{2,0}-\bar{x}_2-BD)-B^2D\\
& &  \ \ \ \ \ = \left(C (x^*_{2,0}/x^{*}_{1,0}-1)-B\right)\left(CD (x^{*}_{1,0}/x^{*}_{2,0}-1)-BD\right)-B^2D \\
& & \ \ \ \ \ = D(C^2+BC)\left(2-x^*_{2,0}/x^{*}_{1,0}-x^{*}_{1,0}/x^{*}_{2,0}\right)\leq 0,
\end{eqnarray*}
which ends the proof.
\end{proof}
\begin{Lemma}\label{L2}
If $a>0$ and $p_1<p_2$ then the positive part of the right arm of the parabola $x_2=ax_1(x_1-p_1)$ is situated to the right of the positive part of the right arm of the parabola $x_2=ax_1(x_1-p_2)$, i.e. if  $az_1(z_1-p_1)=az_2(z_2-p_2)>0$,  $z_1>0$,  $z_2>0$ then $z_1<z_2$.
\end{Lemma}
\begin{proof} Suppose to the contrary that $z_1\geq z_2$. If $z_1=z=z_2$ then $p_1=p_1 $, a contradiction. If $z_1>z_2$ then $z_1-p_1>z_2-p_2$. Moreover $z_2-p_2>0$ as $z_2>0$ and $z_2(z_2-p_2)>0$. Thus $z_1(z_1-p_1)>z_2(z_2-p_2)$, a contradiction.
\end{proof}
\begin{Lemma}\label{L3}
Let $r,q, p>0$, $p'>p$ and $z_1\in (0,p)$, $z'_1\in (0,p')$ and 
$z_2, z'_2\in (0,q)$ be such that
$$
(z_1-p)(z_2-q)=r \ \ \ \mbox{ and } (z'_1-p')(z'_2-q) = r. 
$$
If $z_1/z_2 = z'_1/z'_2$ then $z_1<z'_1$ and $z_1<z'_2$.
\end{Lemma}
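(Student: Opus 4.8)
The plan is to collapse the two defining equations to a single real variable by exploiting the common ratio, and then argue by contradiction. Put $\lambda:=z_1/z_2=z_1'/z_2'\in(0,+\infty)$, so that $z_1=\lambda z_2$ and $z_1'=\lambda z_2'$. Since $z_1\in(0,p)$, $z_1'\in(0,p')$ and $z_2,z_2'\in(0,q)$, each of the four quantities $p-\lambda z_2=p-z_1$, $p'-\lambda z_2'=p'-z_1'$, $q-z_2$, $q-z_2'$ is strictly positive, and the two hypotheses become
$$
(p-\lambda z_2)(q-z_2)=r=(p'-\lambda z_2')(q-z_2').
$$
I would then prove that $z_2<z_2'$; from this $z_1=\lambda z_2<\lambda z_2'=z_1'$ follows at once, and the two inequalities $z_1<z_1'$ and $z_2<z_2'$ are the assertion of the lemma.

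To obtain $z_2<z_2'$, suppose to the contrary that $z_2\ge z_2'$. Combining $z_2\ge z_2'$ with $p<p'$ gives the chain
$$
0<p-\lambda z_2\le p-\lambda z_2'<p'-\lambda z_2',
$$
where the first inequality is just $z_1<p$, the middle one is $\lambda z_2\ge\lambda z_2'$, and the last one is $p<p'$; in particular $p'-\lambda z_2'>0$. In the same way $z_2\ge z_2'$ yields $0<q-z_2\le q-z_2'$. Multiplying the strict inequality $p-\lambda z_2<p'-\lambda z_2'$ by $q-z_2>0$, then multiplying $q-z_2\le q-z_2'$ by $p'-\lambda z_2'>0$, and concatenating the results, one gets
$$
r=(p-\lambda z_2)(q-z_2)<(p'-\lambda z_2')(q-z_2')=r,
$$
a contradiction. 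Hence $z_2<z_2'$, and the lemma follows.

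I do not expect a genuine obstacle here — the proof is a couple of lines of inequality chasing. The only point needing a little care is the sign bookkeeping: every factor-wise inequality must be multiplied by a quantity already shown to be positive, and the positivity of $p'-\lambda z_2'$ that makes the last estimate legitimate should be read off from the displayed chain rather than assumed separately. For completeness I note a marginally more computational alternative: fix $\lambda$, set $\phi_s(t):=(s-\lambda t)(q-t)$, observe that $\phi_{p'}(t)-\phi_p(t)=(p'-p)(q-t)>0$ on $(0,q)$ and that $\phi_{p'}$ is strictly decreasing on $\bigl(0,\min\{q,p'/\lambda\}\bigr)$ because its derivative $2\lambda t-\lambda q-p'$ is negative there, note that both $z_2$ and $z_2'$ lie in that interval, and conclude from $\phi_{p'}(z_2')=r=\phi_p(z_2)<\phi_{p'}(z_2)$ that $z_2'>z_2$; the contradiction argument above is shorter and avoids pinning down the monotonicity interval.
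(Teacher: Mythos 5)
Your proof is correct and is essentially the paper's own argument: both negate the conclusion (which, thanks to the common ratio $\lambda=z_1/z_2=z_1'/z_2'$, amounts to assuming $z_1\ge z_1'$ and $z_2\ge z_2'$ simultaneously) and then multiply the factor-wise inequalities $0<p-z_1<p'-z_1'$ and $0<q-z_2\le q-z_2'$ to reach $r<r$. Note that you establish $z_2<z_2'$ rather than the literally stated $z_1<z_2'$; the latter is evidently a misprint, and $z_2<z_2'$ is exactly what the paper's proof derives and what is used in the proof of Theorem \ref{EQ1.1}.
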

\begin{proof} If we suppose to the contrary then $z_1\geq z'_1$ and $z_2\geq z'_2$. This yields
$$
r =(p-z_1)(q-z_2) < (p'-z'_1)(q-z'_2)<r, 
$$
a contradiction.
\end{proof}
\begin{proof}[Proof of Theorem \ref{EQ1.1}] 
Set 
$$
H=\{(x_1,x_2)\in \R^2\mid\; \eqref{m5C} \mbox{ holds true}\}
$$
and note that the set $H$ is the arc of the hyperbola given by \eqref{m5C} in the positive quadrant $(0,+\infty)\times (0,+\infty)$ with ends points $(0,\bar x^{*}_{2})$ and $(x^{*}_{1},0)$,  where
\[
x^{*}_{1}=\bar{x}_1+B-\frac{B^2D}{\bar{x}_2+BD},\ \ \ x^{*}_{2}=\bar{x}_2+BD-\frac{B^2D}{\bar{x}_1+B}.
\]
It is easily seen that $x^{*}_1>\bar x_1$ and $x^{*}_2> \bar x_2$. For any $y_1\in\R$ we define the set
$$
P_1(y_1)=\{(x_1,x_2)\mid (\ref{m4A}) \mbox{ holds true}\}.
$$
$P_1(y_1)$ is a parabola with vertical axis. We shall consider two cases
$$
\mbox{ (A)}  \ \ \bar{x}_1+A\bar{y}_1-C\leq  x^{*}_{1} \ \ \ \mbox{ and } \ \  \ \mbox{ (B)} \ \    \bar{x}_1+A\bar{y}_1-C>  x^{*}_{1}.
$$
In the case (A), the parabola $P_1(0)$ intersects with $H$ in one point that we denote by $(b_1,b_2)$.
If (B) holds, then we put $(b_1,b_2)=( x^{*}_{1},0)$. Let $H_1=\{(x_1,x_2)\in H:\;x_1\in(0,b_1)\}$ and note that
\begin{equation}\label{m5D}
\mbox{ for any } (x_1,x_2)\in H_1  \mbox{ there exists a unique } y_1>0 \mbox{ such that } \mbox{(\ref{m4A}) holds true.}
\end{equation}
The function $\hat y_1:(0,b_1)\to(0,\infty)$ given by (\ref{m5D}) is continuous, decreasing and
\[
\lim_{x_1\to0^+} \hat y_1(x_1)=+\infty,
\]
\[\lim_{x_1\to b_1^-} \hat y_1(x_1)=\left\{\begin{array}{ll}
0& \mbox{ if (A) holds, } \\
y_b &\mbox{ if (B) holds, }
\end{array}\right.\]
where $y_b>0$ solves
\[
\bar{x}_1+A\bar{y}_1-Ay_b-C= x^{*}_{1}.
\]
\indent For any fixed $y_2\in \R$ define 
$$ 
P_2(y_2)=\{(x_1,x_2)\in \R^2\mid (\ref{m4B}) \mbox{ holds true}\}.
$$ 
It is a parabola with horizontal axis. As before We shall consider two cases
$$
(C) \ \ \bar{x}_2+A\bar{y}_2-CD\leq  x^{*}_{2}  \ \ \ \mbox{ and } \ \  \ (D)  \ \ \bar{x}_2+A\bar{y}_2-CD>  x^{*}_{2}.
$$
In the case $(C)$, the parabola $P_2(0)$ intersects the set $H$ in one point that we denote by $(a_1,a_2)$. If in the case $(D)$, we set $(a_1,a_2)=(0,x^*_2)$. Define $H_2=\{(x_1,x_2)\in H:\;x_1\in(a_1, x^{*}_{1})\}$. One may easily show that
\begin{equation}\label{m5E}
\mbox{for any } (x_1,x_2)\in H_2 \mbox{ there exists a unique } y_2>0  \mbox{ such that \ref{m4B}) holds true.}
\end{equation}
The function $\hat y_2:(a_1,x^*_1)\to(0,\infty)$ given by (\ref{m5E}) is increasing and
\[
\lim_{x_1\to (x^{*}_{1})^-} \hat y_2(x_1)=+\infty
\]
\[\lim_{x_1\to a_1^+}\hat y_2(x_1)=\left\{\begin{array}{ll}
0& \mbox{ if $(C)$ holds, } \\
y_a &\mbox{ if $(D)$ holds, }
\end{array}\right.\]
where $y_a>0$ solves
\[
\bar{x}_2+A\bar{y}_2-Ay_a-CD=x^{*}_{2}.
\]
We now claim that there exists $E_{y-ext}(R)>0$ such that 
\begin{equation}\label{a1-less-than-b1}
a_1<b_1 \ \ \ \mbox{ if } \ \ \  0<E< E_{y-ext}(R). 
\end{equation}
To prove it first apply Lemma \ref{L1} to see that the intersection point of the parabolas $P_1(\bar y_1)$, that is the one given by \eqref{R1}, and $P_2(\bar y_2)$ (the one given by \eqref{R2}) intersect in a point that is not below the hyperbola $H$, i.e. it is not inside the bounded connected components of the set $(0,+\infty)\times (0, +\infty)\setminus H$. By Lemma \ref{L2} and the fact that $A\bar y_2>0$ we see that the parabola \eqref{R1} and $P_2(0)$ intersect in a point that is above $H$, i.e. inside the unbounded connected component of $(0,+\infty)\times (0, +\infty)\setminus H$.\\ 
\indent Now we notice that both $x^{*}_1$ and $x^{*}_2$ increase as $E$ does, on the other hand $A \bar y_1$ decreases and becomes negative after $E$ goes above $E_{y-ext}$. For $E=E_{y-ext}$, one has $A\bar y_1 =0$ and $P_1(0)=P_1(A\bar y_1)$, in particular $P_1(0)$ and $P_2(0)$ intersect above $H$ for $E=E_{y-ext}$. For any $E$, denote the positive intersection point of $P_1(0)$ and $P_2(0)$ by $(c_1(E), c_2(E))$ and observe that as $E\geq E_{y-ext}$ grows $(c_1(E),c_2(E))$ moves downward on the parabola $P_2(0)$ (independent of $E$) towards the hyperbola $H$, which, due to Lemma \ref{L3}, goes upward as $E$ increases. There a value that we denote by $E_{y-ext}(R)$ such that the point $(c_1 (E_{y-ext}(R)),c_2(E_{y-ext}(R)))$ reaches $H$ and, for $E<E_{y-ext}(R)$, $(c_1(E), c_2(E))$ stays above $H$, which proves \eqref{a1-less-than-b1}.

\indent We can also infer from the above considerations that, for any $E>E_{y-ext}(R)$, $y_1>0$ and $y_2>0$, the parabolas $P_1(y_1)$ and $P_2(y_2)$ intersect below $H$, which means that there are no positive solutions of \eqref{m4}. \\
\indent Now, for $E<E_{y-ext}(R)$, define $h:(a_1,b_1)\to \R$ by $h(x_1)=\hat y_2(x_1)/ \hat y_1(x_1)$. Clearly $h$ is contiunuous and, since $\hat y_1$ is decreasing and $\hat y_2$ is increasing, we infer that $h$ increasing.\\
\indent We also claim that
\begin{equation}\label{h-limits}
\lim_{x_1\to a_1^+} h(x_1)=0,  \ \ \  \lim_{x_1\to b_1^-} h(x_1)=\infty.
\end{equation}
To prove the claim consider the four cases below.\\
\indent If $(A)$ and $(C)$ hold, then $0\leq a_1$, $ b_1\leq x^{*}_{1}$ and
\[
\lim_{x_1\to a_1^+}  h(x_1)=
\left\{\begin{array}{lll}
\frac{0}{\infty}=0 & \mbox{ if }&a_1=0,\\
\frac{0}{\hat y_1(a_1)}=0 &\mbox{ if }& a_1>0,
\end{array}\right.  \ \ \
\lim_{x_1\to b_1^-} h(x_1) =
\left\{\begin{array}{lll}
\frac{\infty}{0^+}=\infty & \mbox{ if }&b_1=x^{*}_{1},\\
\frac{\hat y_2(b_1)}{0^+}=\infty &\mbox{ if }& b_1<x^{*}_{1}.
\end{array}\right.
\]
\indent If $(A)$ and $(D)$ hold, then $a_1=0$, $b_1\leq x^{*}_{1}$ and
\[
\lim_{x_1\to 0^+} h(x_1)=\frac{y_b}{\infty}=0, \ \ \ \lim_{x_1\to b_1^-} h(x_1)=
\left\{\begin{array}{lll}
\frac{\hat y_2(b_1)}{0^+}=+\infty & \mbox{ if }& b_1< x^{*}_1\\
\frac{+\infty}{0^+}=+\infty &\mbox{ if }& b_1=x^{*}_1.
\end{array}\right.
\]
\indent If $(B)$ and $(C)$ hold, then $0\leq a_1$, $b_1=x^{*}_{1}$ and
\[
\lim_{x_1\to (x^{*}_{1})^{-}} h(x_1)=\frac{+\infty}{y_a}=+\infty, \ \ \ \lim_{x_1\to a_1^+} h(x_1)=
\left\{\begin{array}{lll}
\frac{0}{y_b}=0 & \mbox{ if }&a_1>0,\\
\frac{0}{\infty}=0 &\mbox{ if }& a_1=0.
\end{array}\right.
\]
\indent If $(B)$ and $(D)$, then  $a_1=0$, $b_1=x^{*}_{1}$ and
\[
\lim_{x_1\to 0^+} h(x_1)=\frac{y_b}{+\infty}=0,\ \ \ \lim_{x_1\to (x^{*}_{1})^-} h(x_1)=\frac{+\infty}{y_a}=\infty.
\]
\indent In order to complete the proof we consider the equation
\begin{equation}\label{m4C2}
\frac{\hat y_2(x_1)}{\hat y_1(x_1)}=\frac{\bar{x}_1+B-x_1}{B}.
\end{equation}
The function on the left side is increasing and the function on the right side is decreasing on the interval $(a_1,b_1)$ and
\[
\lim_{x_1 \to a_1^+} \frac{y_2(x_1)}{y_1(x_1)}=0<\frac{\bar{x}_1+B-a_1}{B}  \ \ \ \mbox{ and }  \ \ \ \frac{\bar{x}_1+B-a_1}{B}<\lim_{x_1 \to b_1^-} \frac{y_2(x_1)}{y_1(x_1)}=\infty.
\]
Thus there exists a unique solution $\tilde{x}_1\in(a_1,b_1)$ to (\ref{m4C2}). If we take $\tilde{x}_2>0$ such that $(\tilde{x}_1,\tilde{x}_2)\in H$ and  set $\tilde{y}_i=\hat y_i(\tilde{x}_1)$ for $i=1,2$.
Then $(\tilde{x}_1,\tilde{x}_2,\tilde{y}_1,\tilde{y}_2)$ is the unique solution to the system of the equations \eqref{m4A}-\eqref{m4D}. 
\end{proof} 

\subsection{Stability of positive steady states}
If $(\tilde x_1, \tilde x_2, \tilde y_1, \tilde y_2)$ is the positive steady-state solution of \eqref{m2}, then 
\eqref{m2} can be rewritten (like in the passage between \eqref{r1A} and \eqref{r1B}) as 
\begin{equation}\label{m4a}
\left\{
        \begin{array}{l}
\dot x_1=-x_1 \big( a (x_1-\tilde{x}_1)+ b (y_1-\tilde{y}_1)\big)+m_x R(\tilde{x}_1x_2-\tilde{x}_2x_1)/\tilde{x}_1  \\
\dot x_2=-x_2\big(a  (x_2-\tilde{x}_2)+ b (y_2-\tilde{y}_2)\big)-m_x (1-R)(\tilde{x}_1 x_2-\tilde{x}_2 x_1)/\tilde{x}_2  \\
\dot y_1=c y_1   (x_1-\tilde{x}_1)+m_yR (\tilde{y}_1 y_2-\tilde{y}_2y_1)/\tilde{y}_1\\
\dot y_2 = c y_2   (x_2 -\tilde{x}_2)- m_y (1-R)(\tilde{y}_1y_2-\tilde{y}_2 y_1)/\tilde{y}_2.
\end{array} \right.
\end{equation} 
\begin{prop}\label{Jacobian}
The Jacobian matrix of the right hand side of \eqref{m4a} at the point $(\tilde{x}_1,\tilde{x}_2,\tilde{y}_1,\tilde{y}_2)$ is given by
\[
\left[
\begin{array}{cccc}
   -m_x R\tilde{x}_2/\tilde{x}_1- a \tilde{x}_1 & m_x R&- b \tilde{x}_1 & 0\\
   m_x (1-R)&-m_x (1-R)\tilde{x}_1/\tilde{x}_2- a \tilde{x}_2 & 0 & - b \tilde{x}_2\\
    c \tilde{y}_1 & 0 & -m_y  R \tilde{y}_2/\tilde{y}_1 & m_y  R\\
   0& c \tilde{y}_2 &  m_y  (1-R) & -m_y  (1-R) \tilde{y}_1/\tilde{y}_2
\end{array}
\right]
\]
and the coefficients of the characteristics polynomial $\lambda^4+A\lambda^3+B\lambda^2 + C\lambda+D$ are positive.
\end{prop}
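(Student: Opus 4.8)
The statement has two parts, the first of which is pure bookkeeping. The displayed Jacobian is obtained by differentiating the right-hand sides of \eqref{m4a} and evaluating at $(\tilde x_1,\tilde x_2,\tilde y_1,\tilde y_2)$: for instance $\partial_{x_1}$ of the first component equals $-\big(a(x_1-\tilde x_1)+b(y_1-\tilde y_1)\big)-ax_1-m_xR\tilde x_2/\tilde x_1$, and the first bracket vanishes at the steady state, leaving $-a\tilde x_1-m_xR\tilde x_2/\tilde x_1$; the remaining entries come out the same way, the four zeros reflecting that the right-hand sides of \eqref{m4a} contain no product coupling an $x$-variable of one patch with a $y$-variable of the other. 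I would dispose of this in a line or two.

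For the coefficients, abbreviate the (all positive) quantities in the Jacobian by the fresh symbols $p_i=a\tilde x_i$, $q_i=b\tilde x_i$, $r_i=c\tilde y_i$ ($i=1,2$), $s_1=m_xR$, $s_2=m_x(1-R)$, $t_1=m_yR$, $t_2=m_y(1-R)$, $u_1=m_xR\,\tilde x_2/\tilde x_1$, $u_2=m_x(1-R)\,\tilde x_1/\tilde x_2$, $v_1=m_yR\,\tilde y_2/\tilde y_1$, $v_2=m_y(1-R)\,\tilde y_1/\tilde y_2$, so that the Jacobian reads
\[
M=\begin{pmatrix}
-u_1-p_1 & s_1 & -q_1 & 0\\
s_2 & -u_2-p_2 & 0 & -q_2\\
r_1 & 0 & -v_1 & t_1\\
0 & r_2 & t_2 & -v_2
\end{pmatrix}.
\]
The identities doing all the work are $u_1u_2=s_1s_2=m_x^2R(1-R)$ and $v_1v_2=t_1t_2=m_y^2R(1-R)$, which come straight from the definitions; consequently the lower-right $2\times2$ block $S$ of $M$ satisfies $\det S=v_1v_2-t_1t_2=0$, while the upper-left $2\times2$ block $P$ satisfies $\det P=(u_1+p_1)(u_2+p_2)-s_1s_2=u_1p_2+p_1u_2+p_1p_2>0$. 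Writing the characteristic polynomial as $\det(\lambda I-M)=\lambda^4+A\lambda^3+B\lambda^2+C\lambda+D$ we have $A=-\operatorname{tr}M$, $B=$ the sum of the six $2\times2$ principal minors of $M$, $C=-$ the sum of the four $3\times3$ principal minors, and $D=\det M$. Positivity of $A$ is immediate since $\operatorname{tr}M=-(u_1+p_1+u_2+p_2+v_1+v_2)$. For $B$, the six $2\times2$ principal minors are $\det P>0$; then $(u_1+p_1)v_1+q_1r_1>0$ and $(u_2+p_2)v_2+q_2r_2>0$; then $(u_1+p_1)v_2>0$ and $(u_2+p_2)v_1>0$; and finally $\det S=0$ — so $B>0$.

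For $C$, I would expand each $3\times3$ principal minor along the row or column with the most zeros. Using $\det S=0$ and $\det P>0$ one finds: deleting index $4$, the minor equals $-v_1\det P-q_1r_1(u_2+p_2)$; deleting $3$, it equals $-v_2\det P-q_2r_2(u_1+p_1)$; deleting $2$, it equals $-q_1r_1v_2$; deleting $1$, it equals $-q_2r_2v_1$. Each is strictly negative, hence $C>0$.

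The only genuinely computational step is $D=\det M$, and the point is to pivot on $P$ (which is invertible) rather than on the singular block $S$. Writing $M$ in $2\times2$ block form with upper-left block $P$, upper-right $Q=\operatorname{diag}(-q_1,-q_2)$, lower-left $R_0=\operatorname{diag}(r_1,r_2)$ and lower-right $S$, one has $\det M=\det P\cdot\det(S-R_0P^{-1}Q)$; computing $R_0P^{-1}Q$ explicitly and expanding the $2\times2$ determinant, the term $v_1v_2-t_1t_2$ cancels and the terms carrying $(\det P)^{-2}$ collapse because $(u_1+p_1)(u_2+p_2)-s_1s_2=\det P$, leaving
\[
D=\det M=v_1r_2q_2(u_1+p_1)+v_2r_1q_1(u_2+p_2)+t_1r_2s_2q_1+t_2r_1s_1q_2+r_1r_2q_1q_2>0.
\]
Hence $A,B,C,D>0$. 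The main obstacle is precisely this last determinant, and everything hinges on taking the Schur complement with respect to $P$ and on the cancellations forced by $u_1u_2=s_1s_2$ and $v_1v_2=t_1t_2$.
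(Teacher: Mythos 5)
Your proof is correct --- I checked the Jacobian entries, the identities $u_1u_2=s_1s_2=m_x^2R(1-R)$ and $v_1v_2=t_1t_2=m_y^2R(1-R)$, the six $2\times2$ and four $3\times3$ principal minors, and the final formula for $\det M$ (which I confirmed independently by cofactor expansion along the first row: $\det M=P_1q_2v_1r_2+q_1r_1P_2v_2+s_1q_2r_1t_2+q_1t_1s_2r_2+q_1q_2r_1r_2$ with $P_i=u_i+p_i$, agreeing with yours term by term). However, it takes a genuinely different route from the paper: the paper gives no analytic argument at all for the positivity of $A,B,C,D$ --- the authors state only that they computed the coefficients symbolically in MAPLE and Python and observed that ``all their symbolic components are positive.'' Your argument replaces that computer-assisted verification with a structural proof: you express $A,B,C,D$ as signed sums of principal minors, and the two identities $u_1u_2=s_1s_2$ and $v_1v_2=t_1t_2$ do all the work, forcing $\det S=0$ (which kills the potentially sign-ambiguous terms in the $2\times2$ and $3\times3$ minors) while keeping $\det P=u_1p_2+p_1u_2+p_1p_2>0$ (which makes the Schur complement pivot legitimate for $D$). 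What your approach buys is transparency and verifiability by hand: it exhibits exactly which cancellations make every surviving monomial positive, rather than asserting the outcome of a symbolic expansion; what the paper's approach buys is brevity. Your version is the stronger one and would be a worthwhile addition to the text.
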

\noindent We have computed the coefficients $A, B, C$ and $D$ by use of the MAPLE and Python symbolic computation libraries to see that all their symbolic components are positive.
\begin{remark}\label{stability-algorithm} {\em (i)} By Routh Theorem the steady state $(\tilde{x}_1,\tilde{x}_2,\tilde{y}_1,\tilde{y}_2)$ is locally stable if the coefficients $A,\,B,\,C,\,D$ are positive (which follows from Proposition \ref{Jacobian})  and 
\[
ABC>C^2+A^2D.
\]
The last inequality we verify numerically in all the simulations from Section 5.\\
\indent {\em (ii)} One may check that the idea from Proposition \ref{1S-MPA-stablility} for indicating a (global) Lyapunov functional applied to the steady state $(\tilde x_1, \tilde x_2, \tilde y_2, \tilde y_2)$ for the system \eqref{m4a} appears not to work in general.\end{remark}

\section{Effects of establishing an MPA in a prey-predator model}
In the section we present results concerning the yields and conservation effects of an MPA creation in the prey-predator model given by (\ref{m1A}). We shall consider a scenario with parameters
\begin{eqnarray}\label{parameters-values}
a=1.0, \ \ b=0.6, \ \ c=0.4, \ \ d=0.3,  \ \  q_x = 0.7, \ \ q_y = 0.35.
\end{eqnarray}
Before establishing an MPA the model is given by (\ref{pp1}). The natural equilibrium (with no fishing) is $\bar x=0.75$, $\bar y=0.42$ (see (\ref{pp2}) for $E=0$), i.e. the prey  biomass is $1.8$ times the predator biomass (they are of the same order of magnitude). By Proposition \ref{pp2}, the predator extinction  threshold $E_{y-ext}$ and the prey extinction threshold $E_{x-ext}$ are given by:
\begin{eqnarray*}
E_{y-ext}=0.15873, \ \ E_{x-ext}=1.42857.
\end{eqnarray*}
The total catch $C(E)$ given by (\ref{pp4}) attains its maximal value $0.25$ for the fishing effort 
$E=E_{x-ext}/2=0.714286$, which is far more than $E_{y-ext}$ and for this value of fishing effort the predator becomes extinct, which means that the catch consists of prey only. The total catch at the predator extinction threshold $E_{y-ext}$ is $0.098765$ (note that in this case it is around $2.5$ times less than the maximal value $C(E_{x-ext}/2)$). The predator catch $C_{Pred}(E)$ given by (\ref{pp3a}) has maximum at $E=E_{y-ext}/2$ and its value amounts to $C_{max}^{\, pred} = 0.005787$. 

\subsection{The effects for an over-exploited fishery}

We shall present the effects of establishing an MPA for the fishing effort equal to the threshold $E=E_{y-ext}$. In the framework of the prey-predator model we may consider it as an over-exploited fishery, since the predator population is subject to extinction for that value of fishing effort. Recall, that by Theorem \ref{EQ1.1}, the positive steady-state
$$
(\tilde{x}_1(R),\,\tilde{x}_2(R),\,\tilde{y}_1(R),\,\tilde{y}_2(R))
$$
of the system (\ref{m1A}) is unique for each $R\in(0,1)$. We shall measure the conservation effect of establishing an MPA by the use of the {\em normalized total biomass of predator} $\bar{y}_{nom}(R)$ given by
\[
\bar{y}_{nom}(R)=\frac{\tilde{y}_1(R)+\tilde{y}_2(R)}{\bar{y}},
\]
where $\bar y$ is the predator biomass at the nontrivial equilibirium without any fishing effort and in consequence without MPA ($R=0$, $E=0$). This quantity shows the total predator population size in relation to the natural environmental equilibrium.\\
\indent The yield effect of an MPA creation for a given fishing effort $E$ will be measured by means of the {\em normalized total catch} $C_{nom}(R)$ and the {\em normalized predator catch} $C_{nom}^{pred}(R)$ defined by
\[
C_{nom}(R) =\frac{Eq_x\tilde{x}_1(R)+Eq_y\tilde{y}_1(R)}{C(E_{y-ext})},\;\;\; \; \;
C_{nom}^{\, pred} (R)=\frac{Eq_y\tilde{y}_1(R)}{C_{max}^{\, pred}},
\]
where 
$$
C(E_{y-ext}) = E_{y-ext}\, q_y \bar x(E_{y-ext}) 
$$
with $\bar x(E_{y-ext})$ given by \eqref{pp1a} (recall $\bar y (E_{y-ext}) = 0$), and 
$C_{max}^{\, pred}=\max_{E>0} C^{pred}(E)$, i.e. the maximal predator catch that is attained without any MPA ($R=0$).  \\
\indent In the particular case of the parameters given by \eqref{parameters-values}, the biomass of predator at the natural equilibrium is $\bar y=0.42$, the catch with the fishing effort equal to the predator extinction threshold is $C(E_{y-ext}) = 0.098765$ and the predator maximal catch is $C_{max}^{\, pred} = C(E_{y-ext}/2) = 0.005787$. In order to illustrate the effects of implementing an MPA when the fishing efforts approaches the predator extinction threshold, we assume that $E=E_{y-ext}$ and compute the normalized values $\bar{y}_{nom}(R)$, $C_{nom}(R)$ and $C_{nom}^{\, pred}(R)$ for different sizes $R\in (0,1)$. We considered nine scenarios of mobility coefficients: $m_x,m_y\in\{0.1,\,1,\,10\}$, where, like in the single species model, the value $0.1$ of the mobility coefficient corresponds to sedentary species and values $1$ and $10$ correspond to slowly migrating and moderately fast migrating species, respectively. To reduce the number of considered scenarios we skipped in the section the cases with $m_x=100$ or $m_y =100$ that correspond to fast migrating species. Like in the one species model (see Figure \ref{Fig1}) for moderately fast migrating and for fast migrating species the results are very close. In Figures \ref{F6A}, \ref{F6B} and \ref{F6C} we present the graphs of the normalized quantities $\bar{y}_{nom} (\cdot)$, $C_{nom}(\cdot)$ and $C_{nom}^{\, pred}(\cdot)$ in for $m_x=1$ and $m_y$ taking values $0.1$, $1$ and $10$. The case with $m_x$ equal to $0.1$ or $10$ are alomost identical to the one with $m_x=1$.\\ 
\indent Here the equilibrium points $(\tilde x_1 (R), \tilde x_2 (R), \tilde y_1(R), \tilde y_2 (R))$ were determined numerically by finding $\tilde x_1(R)$ as a solution of \eqref{x1-key-equation},
$\tilde x_2(R)$ is obtained from \eqref{x-1-x-2-relation} and $\tilde y_1(R), \tilde y_2(R)$
from the formulas \eqref{formulas-for_y1-y2}. The local stability of these equilibrium points is verified as indicated in Remark \ref{stability-algorithm}.


\begin{figure}[ht]
  \centering  \includegraphics[keepaspectratio=true,width=7cm]{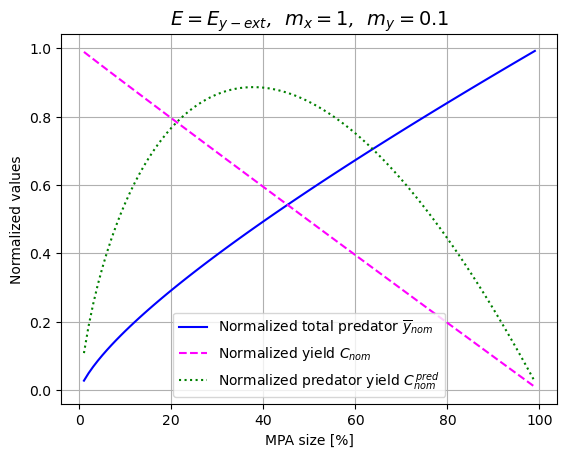}
  \caption{Normalized total predator population, total yield, total predator yield and in dependence of $R$ for $E=E_{y-ext}$ $m_x=1$, $m_y=1$},\label{F6A}
\end{figure}
\begin{figure}[ht]
  \centering  \includegraphics[keepaspectratio=true,width=7cm]{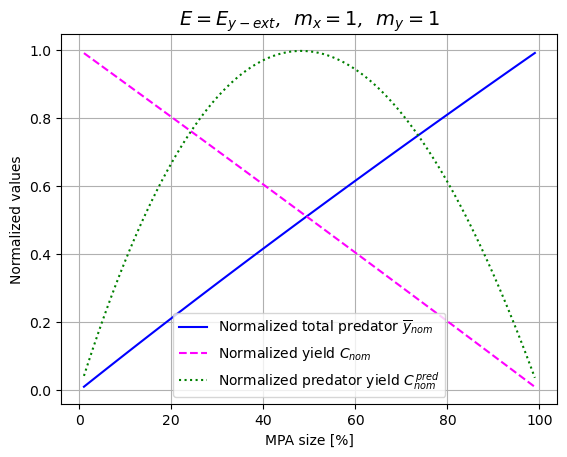}
  \caption{Normalized total predator population, total yield, total predator yield and in dependence of $R$ for $E=E_{y-ext}$ $m_x=1$, $m_y=1$},\label{F6B}
\end{figure}
\begin{figure}[ht]
  \centering  \includegraphics[keepaspectratio=true,width=7cm]{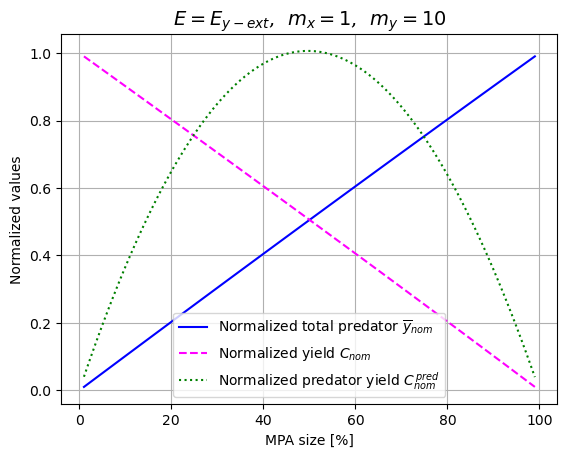}
  \caption{Normalized total predator population, total yield, total predator yield and in dependence of $R$ for $E=E_{y-ext}$ $m_x=1$, $m_y=1$},\label{F6C}
\end{figure}
\indent The conservation benefits of establishing an MPA are observed for every considered values of the mobility coefficients. The normalized total predator population $\bar{y}(\cdot)$ appears to be an increasing function (nearly/almost linear with the slope $1$). The yield effects are diverse. The total yield functions $C_{nom}$ are decreasing (nearly/almost linear with the slope $-1$). Whereas the predator catch $C_{nom}^{\, pred}$ has a maximum for the MPA sizes between $35\%-50\%$, depending on $m_x$ and $m_y$. For sedentary predator ($m_y=0.1$) the maximal predator catch is attained near $R=40\%$ and it realizes almost $90\%$ of the maximal predator catch $C_{max}^{\, pred}$. For slowly and moderately fast migrating predator, i.e. for $m_y=1$ and $m_y=10$, respectively, the predator catch has its maximum near $R=50\%$ and it realizes almost $100\%$ of the maximal predator catch.

\subsection{Conservation benefits for $E>E_{y-ext}$} 
Let us consider the case when for any reason the fishing effort $E$ exceeds the threshold $E_{y-ext}$. To this end we take $E=1.25\cdot E_{y-ext}$ and look at the dependence of $\bar{y}_{nom}$, $C_{nom}$ and $C_{nom}^{\, pred}$ on $R$ for $m_x=1$ and $m_y\in \{ 0.1, 1, 10\}$ (the graphs for $m_x$ taking the values $0.1$ and $10$ are very similar). The results are shown in Figures \ref{F7A}, \ref{F7B} and \ref{F7C}.\\
\begin{figure}[ht]
  \centering  \includegraphics[keepaspectratio=true,width=7cm]{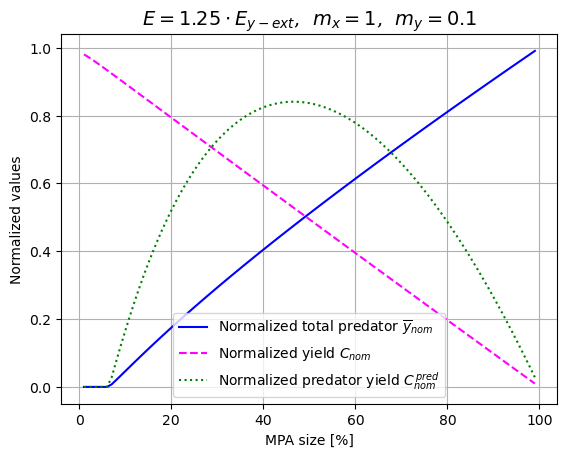}
  \caption{Normalized total predator population, total yield, total predator yield and in dependence of $R$ for $E=1.25 \cdot E_{y-ext}$ $m_x=1$, $m_y=0.1$},\label{F7A}
\end{figure}
\begin{figure}[ht]
  \centering  \includegraphics[keepaspectratio=true,width=7cm]{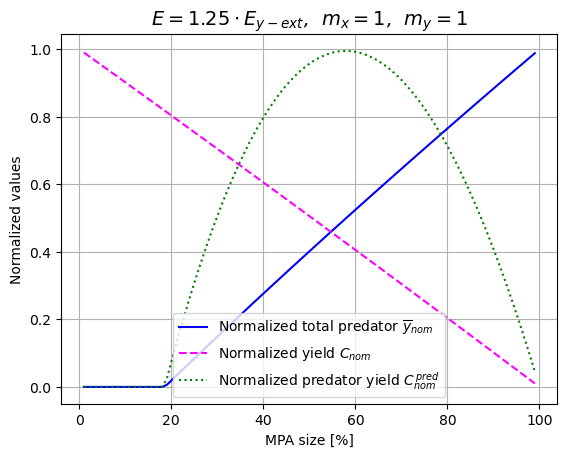}
  \caption{Normalized total predator population, total yield, total predator yield and in dependence of $R$ for $E=1.25\cdot E_{y-ext}$ $m_x=1$, $m_y=1$},\label{F7B}
\end{figure}
\begin{figure}[ht]
  \centering  \includegraphics[keepaspectratio=true,width=7cm]{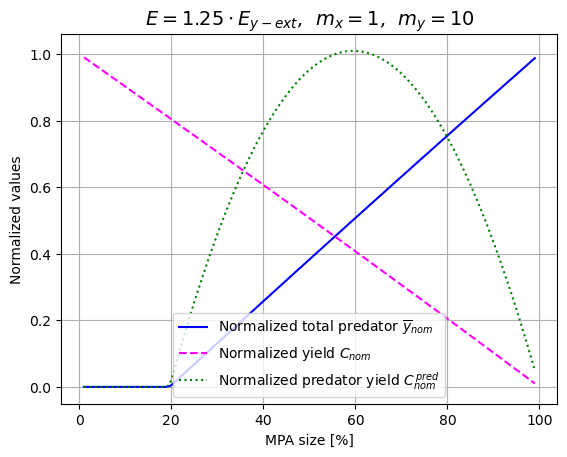}
  \caption{Normalized total predator population, total yield, total predator yield and in dependence of $R$ for $E=1.25 \cdot E_{y-ext}$ $m_x=1$, $m_y=1$},\label{F7C}
\end{figure}
\indent Observe that in all the considered situations when the MPA is not sufficiently large and the fishing effort remains at the level exceeding $E_{y-ext}$ (here $E=1.25\cdot E_{y-ext}$), the predator must extinct. However, if the MPA size $R$ is around $30\ \%$, then the total predator population exceeds or is close to $20 \%$ of that natural one $\bar y$ (with no fishing effort). Although the total catch decreases with the gradient $-1$, the total predator catch exceeds $80\ \%$
of the maximal one $C_{max}^{\ pred}$ (with no MPA) for $R$ near $50\%$ if $m_y=0.1$ and reaches $100\%$ of $C_{max}^{\ pred}$ for $R$ near $50\%$ if $m_y=1$ or $m_y=10$. Hence, the implementation of an MPA allows for the fishing effort even considerably exceeding the predator extinction threshold, in the sense that it maintains the predator at a survival level and assure decent predator yield quantity while the total yield is always lower by the percent equal to the MPA size.

\subsection{Shift of predator extinction threshold for fishing effort}
The previous subsection shows that implementing an MPA raises the fishing effort value for which the predator extincts completely. In this subsection we shall see how much the extinction threshold changes. Let for a given $R\in (0,1)$, $E_{y-ext}(R)$ be as in Theorem \ref{EQ1.1}, that is the lowest value of $E$, for which the predator dies out, i.e. $\tilde y_1(R)=\tilde y_2 (R)=0$. Analogously, we may consider $E_{x-ext}(R)$. We saw in Proposition \ref{pp2} that, in case without MPA, $E_{y-ext}>E_{x-ext}$, that is it is the predator that extincts first when $E$ grows. It will not change when an MPA is implemented, since due to Remark \ref{extinction-order}, $E_{y-ext}(R)\leq E_{x-ext}(R)$ for $R\in (0,1)$.\\
\indent Let us consider the system \eqref{m1A} with the parameters given by \eqref{parameters-values} and various $E$ in a few scenarios depending on $m_x$ and $m_y$. In order to illustrate to which extent creating an MPA protects the predator from extinction, we determine the value, which we call the {\em normalized threshold predator extinction effort} 
$$
E_{y-ext} (R)/E_{y-ext}
$$ 
for various $R\in (0,1)$. In Figures \ref{F11A}, \ref{F11B} and \ref{F11C} we show the graphs of the normalized threshold predator extinction effort for different $m_x$ and $m_y$ (we have skipped the graph for $m_x=0.1$ and $m_y=0.1$ since then the values are significantly larger than those for $m_x=0.1$ and $m_y=1$). Observe that, in all the considered cases, when the MPA size is around $R=30\%$ the predator extinction threshold $E_{y-ext}(R)$ is more than $1.4\cdot E_{y-ext}$, which indicates that an MPA indeed protects the predator population even if $E$ goes beyond the extinction threshold $E_{y-ext}$. 

\begin{figure}[ht]
  \centering  \includegraphics[keepaspectratio=true,width=7cm]{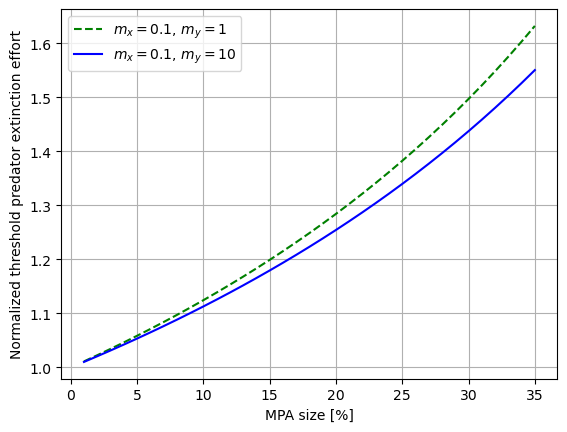}
  \caption{Dependence of the normalized threshold value (i.e. the lowest value at which the predator extincts divided by $E_{y-ext}$) on MPA size} \label{F11A}
\end{figure}
\begin{figure}[ht]
  \centering  \includegraphics[keepaspectratio=true,width=7cm]{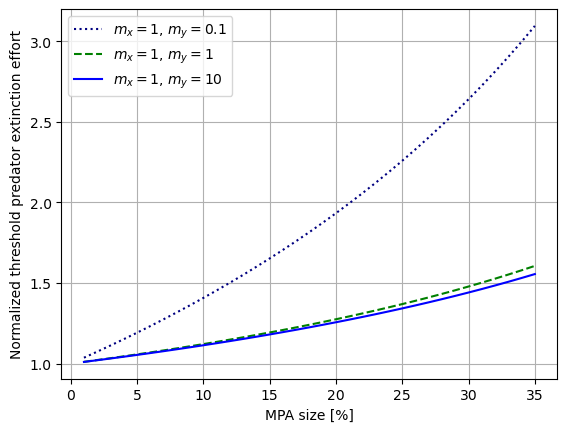}
  \caption{Dependence of the normalized threshold value (i.e. the lowest value at which the predator extincts divided by $E_{y-ext}$) on MPA size}
  \label{F11B}
\end{figure}

\begin{figure}[ht]
  \centering  \includegraphics[keepaspectratio=true,width=7cm]{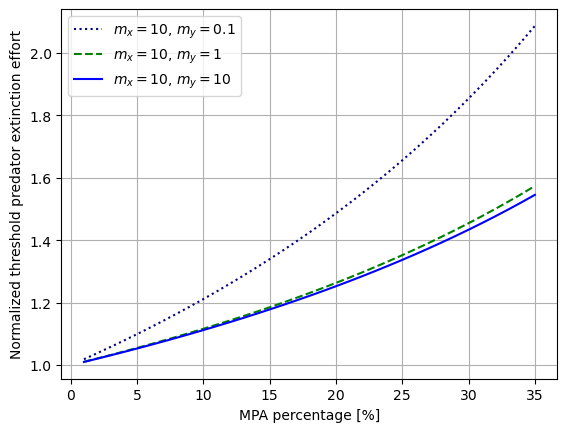}
  \caption{Dependence of the normalized threshold value (i.e. the lowest value at which the predator extincts divided by $E_{y-ext}$) on MPA size}
  \label{F11C}
\end{figure}
\section{Concluding remarks}

Our analysis of the single-species and prey-predator models indicates that a need for implementing an MPA appears in two different contexts but both are referred to as over-exploitation. A fishery with a single species is considered over-exploited if the fishing effort ($E$) surpasses the species extinction threshold. However, in the prey-predator model, over-exploitation is identified when the fishing effort reaches the predator extinction threshold, which notably is lower than that of the prey. When fishing effort surpasses the predator extinction threshold, the fishery simply becomes a single-species one. Thus, in models featuring two species, the primary objective of establishing an MPA is the conservation of both species.\\
\indent We observe that in over-exploited fisheries, whether governed by the single-species or prey-predator model, the establishment of an MPA of reasonable size (e.g. $30\%$) yields significant conservation benefits, ensuring that the populations are sustained at secure levels. Furthermore, when the MPA is sufficiently large (around $50\%$), the total yield in the single species model and the total predator yield in the prey-predator model are sustainable and reaching $80\%$ or even $100\%$ of their maximum (depending on species mobility). However, in the prey-predator model, preserving the predator comes with a trade-off. In the two-species model, the total catch, accounting for both prey and predator, decreases as the size of the MPA increases. This mirrors the trend observed in the single-species model, where an MPA consistently results in decreased yields unless fishing effort exceeds the optimal level. Therefore, a similar pattern in the prey-predator model is unsurprising as the fishing effort which poses a threat to extinction in the single-species model is significantly higher than the effort that endangers predator extinction in the prey-predator model. Then the value of the fishing effort corresponding to the over-exploitation in the prey-predator model is at the level of sustainable fishing effort for the prey. But 
the prey are the main part of the catch and for a sustainable fishing effort a marine reserve reduces the catch, hence the total catch drop is caused by the decrease in the prey catch that is not compensated by the increase of the predator catch.

\end{document}